\definecolor{NewColor}{rgb}{0,0,0}
\newtheorem{proposition}{Proposition}
\newcommand{\myVec}[1]{{\boldsymbol{#1}}}
\newcommand{\myMat}[1]{{\boldsymbol{#1}}}
\newcommand{\mySet}[1]{\mathcal{#1}}
\newcommand{\argmax}{\mathop{\mathrm{arg\,max}}\limits}
\newcommand{\Opt}{^{\rm opt}}
\newcommand{\Input}{\myVec{x}}
\newcommand{\InputSpace}{\mySet{X}} 
\newcommand{\Label}{\myVec{s}}
\newcommand{\LabelSpace}{\mySet{S}} 
\newcommand{\HypParam}{\myVec{\theta}^{\rm h}}
\newcommand{\PrecodedSymbol}{\myVec{p}}
\newcommand{\RPCAMat}{\myMat{X}}
\newcommand{\RPCARank}{\myMat{V}}
\newcommand{\RPCASparse}{\myMat{Y}}
\acrodef{dl}[DL]{deep learning}
\acrodef{ao}[AO]{alternating optimization}
\acrodef{cnn}[CNN]{Convolutional Neural Network}
\acrodef{dnn}[DNN]{deep neural network}
\acrodef{map}[MAP]{maximum a posteriori}
\acrodef{mlse}[MLSE]{maximum likelihood sequence estimate}
\acrodef{ml}[ML]{machine learning}
\acrodef{ss}[SS]{state space}
\acrodef{awgn}[AWGN]{additive white gaussian noise}
\acrodef{dnn}[DNN]{deep neural network} 
\acrodef{snr}[SNR]{signal-to-noise ratio}
\acrodef{cv}[CV]{constant velocity}
\acrodef{ca}[CA]{constant acceleration}
\acrodef{gnss}[GNSS]{global navigation satellite system}
\acrodef{pr}[PR]{pseudo range}
\acrodef{prr}[PRR]{pseudo range rate}
\acrodef{kg}[KG]{Kalman gain}
\acrodef{mb}[MB]{model based}
\acrodef{rnn}[RNN]{recurrent neural network}
\acrodef{mse}[MSE]{mean squared error}
\acrodef{ae}[AE]{Autoencoder}
\acrodef{mc}[MC]{Monte Carlo}
\acrodef{svd}[SVD]{singular value decomposition}
\acrodef{mimo}[MIMO]{multiple-input multiple-output}
\acrodef{pgd}[PGD]{Projected Gradient Descent}
\acrodef{pga}[PGA]{Projected Gradient Ascent}
\acrodef{bs}[BS]{base station}
\acrodef{rf}[RF]{Radio Frequency}
\acrodef{csi}[CSI]{channel state information}
\acrodef{flops}[FLOPS]{Floating Point Operations}
\acrodef{apga}[LAPGA]{Learned Approximated \ac{pga}}
\acrodef{dma}[DMA]{dynamic metasurface antenna}
\acrodef{dnns}[DNNs]{Deep Neural Networks}
\acrodef{rpca}[RPCA]{robust principal component analysis}
\acrodef{larpca}[LARPCA]{Learned Approximated \ac{rpca}}
\title{Deep Unfolding with Approximated\\ Computations for Rapid Optimization
}
\author{Dvir Avrahami$^*$, Amit Milstein$^*$, Caroline Chaux,~\IEEEmembership{Senior Member, IEEE},\\ Tirza Routtenberg,  \IEEEmembership{Senior Member, IEEE}, and Nir Shlezinger,  \IEEEmembership{Senior Member, IEEE}
\thanks{$^*$Equal contribution. 
Parts of this work were presented at the IEEE International Conference on Acoustics, Speech, and Signal Processing (ICASSP) 2025 as the paper \cite{milstein2025learned}. 
D. Avrahami, A. Milstein, T. Routtenberg, and N. Shlezinger are with the School of ECE, Ben-Gurion University of the Negev,  Israel (e-mail: \{dviravra; amitmils\}@post.bgu.ac.il, \{tirzar; nirshl\}@bgu.ac.il). 
C. Chaux is with CNRS, IPAL, Singapore (e-mail: caroline.chaux@cnrs.fr).  
This work was supported by the European Research Council (ERC) under the ERC Starting Grant No. 101163973 (FLAIR), by the ISRAEL SCIENCE FOUNDATION (Grant No. 1148/22), and by the Israeli Innovation Authority.
}
\vspace{-0.5cm}}
\begin{document}
%
\maketitle

\begin{abstract}
Optimization-based solvers play a central role in a wide range of signal processing and communication tasks. However, their applicability in latency-sensitive systems is limited by the sequential nature of iterative methods and the high computational cost per iteration. While deep unfolding has emerged as a powerful paradigm for converting iterative algorithms into learned models that operate with a fixed number of iterations, it does not inherently address the cost of each iteration. In this paper, we introduce a learned optimization framework that jointly tackles iteration count and per-iteration complexity. Our approach is based on unfolding a fixed number of optimization steps, replacing selected iterations with low-complexity approximated computations, and learning extended hyperparameters from data to compensate for the introduced approximations. We demonstrate the effectiveness of our method on two representative problems: $(i)$ hybrid beamforming;  and $(ii)$ robust principal component analysis. These fundamental case studies show that our learned approximated optimizers can achieve state-of-the-art performance while reducing computational complexity by over three orders of magnitude. Our results highlight the potential of our approach to enable rapid, interpretable, and efficient decision-making in real-time systems.
\end{abstract}

\acresetall

 \vspace{-0.2cm}
\section{Introduction} 

Iterative optimizers lie at the heart of numerous signal processing and communication tasks~\cite{luo2006introduction}. These methods provide principled solutions to tasks that can be mathematically formulated as optimization problems. 
However, various applications, ranging from wireless communications~\cite{shlezinger2024artificial} to real-time image processing~\cite{ahmadi2023fast} and edge computing~\cite{zhang2025optimization}, impose tight latency constraints that can render conventional iterative solvers impractical. In such settings, the time available to compute a solution is severely limited, and the use of classical optimization algorithms, which typically require tens to hundreds of iterations to converge, becomes a bottleneck. This pressing need for rapid and efficient solvers motivates the development of alternative approaches that can deliver high-quality solutions within tight computational budgets.

Traditionally, the performance of iterative optimization algorithms has been studied from a convergence perspective, focusing on their asymptotic behavior as the number of iterations grows~\cite{boyd2004convex}. In this classical view, the choice of algorithmic hyperparameters, such as step-sizes, penalty weights, or regularization strengths, is typically analyzed under assumptions that ensure theoretical convergence guarantees, often irrespective of their exact numerical values. However, in practice, these hyperparameters critically affect not only the convergence rate but also the computational efficiency of the solver. Techniques such as adaptive step-size selection or backtracking line search are commonly employed to accelerate convergence in terms of iterations~\cite[Ch. 3]{nocedal1999numerical}, \cite{cavalcanti2025adaptive}. Yet, these methods often increase per-iteration complexity, as they require evaluating multiple candidate updates, ultimately leading to increased overall latency, which is the very constraint that practical systems aim to minimize. Other methods include derivative-free optimization, where surrogate models and adaptive restart mechanisms are used to enhance robustness and efficiency under expensive or noisy function evaluations \cite{Coralia_2019_Lindon}.  However, these solvers typically require multiple sequential objective evaluations and do not address the per-iteration computational cost, as they rely on full-precision model evaluations. 

Recent years have witnessed a growing interest in leveraging deep learning tools to enhance optimization algorithms in various applications~\cite{dahrouj2021overview,Meiyi_Javad_2023,Xia_Beamforming_2020}, giving rise to the paradigm of \emph{learning to optimize}~\cite{chen2021learning}. Among the most structured and interpretable approaches within this paradigm is \emph{deep unfolding}~\cite{monga2019algorithm,shlezinger2025deep}, which systematically transforms iterative optimization algorithms into trainable neural network architectures. In deep unfolding, each iteration of the original solver is mapped to a layer in a feedforward network, with a common approach treating its algorithmic hyperparameters as learnable parameters~\cite{shlezinger2023model}. A key advantage of this approach lies in its ability to optimize the optimizer itself under a fixed computational budget, allowing one to preserve the interpretability and domain-specific structure of the original method while tuning its behavior to match the statistics of the data~\cite{shlezinger2022model}. This has been shown to enable reliable operation with a limited number of iterations of unfolded optimizers adopted in various domains, including wireless communications~\cite{balatsoukas2019deep, khani2020adaptive}, distributed optimization~\cite{noah2024distributed,saravanos2024deep}, beamforming~\cite{lavi2023learn, nguyen2023deep,balevi2021unfolded}, integrated sensing and communications~\cite{zhang2025deep, nguyen2024joint}, and outlier detection~\cite{cai2021learned,tan2023deep}, making the method highly suitable for latency-sensitive applications. Moreover, since the learning process is performed offline, no additional computational overhead is introduced at inference time, unlike adaptive schemes such as backtracking.

Nonetheless, in many applications, each iteration may still involve computationally expensive steps, such as matrix inversions, projections, or decompositions, that limit the achievable latency even when the number of iterations is small. This raises the question: can the ability to recast iterative solvers as neural networks be exploited not only to limit the number of iterations, but also to replace their most computationally intensive operations with lightweight, learnable approximations?

In this paper, we propose a new class of learned optimization solvers based on deep unfolding, which compensates both for a limited number of iterations and for approximated computations within each iteration. 
\textcolor{NewColor}{Beyond conventional deep unfolding, which primarily parametrizes iterative solvers~\cite{monga2019algorithm}, this work establishes deep unfolding as a principled framework for complexity reduction. This approach enables the deliberate incorporation of low-complexity approximations within iterative updates and compensating for their induced distortions through learnable parameterization.}
Our key insight is that leveraging data allows designing unfolded architectures that effectively absorb  the performance loss introduced by deliberately simplified operations. This enables designing solvers that are both fast, due to a fixed and small number of unfolded iterations, and lightweight, due to replacing computationally expensive steps with efficient surrogates. 
Unlike standard deep unfolding methods that focus on reducing iteration count via, e.g., learned hyperparameters, our framework introduces a second axis of approximation: replacing costly operations within each iteration with low-complexity surrogates. These approximations are compensated for by expanding the learnable parameter space, allowing the optimizer to adaptively absorb their effects. This dual approximation (in both iteration depth and per-iteration complexity) enhances our approach's applicability in latency- and resource-constrained environments.
In doing so, we uncover an additional core benefit of deep unfolding: its capacity to tolerate and learn around structural approximations, thereby reducing both the number and cost of iterations without sacrificing solution quality.

Our main contributions are summarized as follows:
\begin{itemize}
    \item \textbf{A new paradigm for learned approximated optimization:} We introduce a principled framework for constructing \emph{unfolded approximated iterative optimizers}, wherein selected steps of a classical iterative solver are replaced with lower-complexity operations. These approximate steps are embedded in a fixed-depth unfolded network whose parameters are optimized using training data, allowing the learned model to best match the data distribution. 
    
    \item \textbf{Case study I — Hybrid beamforming:} We apply our framework to the design of hybrid beamformers in multi-antenna wireless systems, where rapid optimization is critical~\cite{ahmed2018survey}. Starting from a classical alternating minimization algorithm, we construct an unfolded model with approximated update rules that avoid explicit matrix inversions. Our results demonstrate that the learned model achieves state-of-the-art beamforming performance while reducing computational complexity by over three orders of magnitude compared to the original solver.

    \item \textbf{Case study II — Robust principal component analysis:} We further demonstrate the versatility of our approach by applying it to \ac{rpca}~\cite{candes2011robust}, a fundamental problem in  signal processing. Using an unfolded version of the iterative thresholding algorithm with approximated gradient computations, we design a learned \ac{rpca} solver that maintains competitive recovery accuracy while dramatically reducing runtime.

    \item \textbf{Empirical validation across domains:} 
    We design and conduct comprehensive experiments that instantiate our framework to these two representative problems of hybrid beamforming and \ac{rpca}, each posing distinct structural and computational challenges. Through these case studies, we demonstrate that our methodology adapts to different optimization landscapes, quantifies the resulting trade-offs, and highlights that the learned approximations enable consistent,
     over three orders of magnitude reduction in runtime in both settings compared to conventional optimization while maintaining competitive performance.
\end{itemize}

The rest of this paper is organized as follows: Section~\ref{sec:system_model} details the considered optimization model, while Section~\ref{sec:Deep} formulates the generic methodology of learned approximated optimization solvers; Section~\ref{sec:HBF} specializes this methodology for hybrid beamforming, while Section~\ref{sec:PCA} details its case study for \ac{rpca}. Finally, Section~\ref{sec:conclusion} provides the concluding remarks.

Throughout this paper, we use boldface lower-case and upper-case letters for vectors (e.g., $\myVec{x}$) and matrices (e.g., $\myMat{X}$), respectively. The $(i,j)$th entry of  $\myMat{X}$ is denoted by $[\myMat{X}]_{i,j}$. Calligraphic letters denote sets, e.g., $\mySet{X}$, with $\mathbb{R}$, and $\mathbb{C}$ being the sets of real and complex numbers,  respectively, while $\|\cdot\|_F$, $\|\cdot\|_1$, $\|\cdot\|_2$, $\|\cdot\|_\ast$, $\odot$, and  $(\cdot)^T$  are the Frobenius/$\ell_1$/$\ell_2$/nuclear norm, Hadamard product, and transpose,  respectively.

 \vspace{-0.2cm}
\section{System Model and Preliminaries}\label{sec:system_model}
In this section, we formalize the optimization model behind our framework and outline the computational challenges it addresses. 
\vspace{-0.1cm}
\subsection{Optimization Model}
\label{sec:opt_model}
\vspace{-0.1cm}
We consider a generic decision-making setup that can be formulated as an optimization problem. While the following describes a general-purpose setup, we later specialize it to two representative case studies. 
Broadly speaking, the goal is to design a decision rule $f: \InputSpace \mapsto \LabelSpace$ that maps the context $\Input \in \InputSpace$, representing the available observations, into a decision $\Label \in \LabelSpace$. In optimization-based frameworks, the decision is obtained by solving a context-dependent minimization problem:
\begin{equation}
    \Label\Opt = \arg\min_{\Label \in \LabelSpace} \mathcal{L}_{\rm o}(\Label; \Input),
    \label{eq:opt_problem}
\end{equation}
where $\mathcal{L}_{\rm o}(\cdot\,;\cdot)$ is a task-dependent objective function encoding the loss associated with a given decision for a specific context.

In many practical scenarios, the optimization problem in \eqref{eq:opt_problem} is tackled using iterative algorithms. Such solvers proceed by repeatedly applying a context-aware update rule that gradually refines an initial estimate $\Label^{(0)}$. Iterative optimization methods often follow a per-iteration search principle, such as backtracking~\cite[Ch. 7]{boyd2004convex}, ensuring that the objective does not increase across iterations, i.e., that the estimate of each iteration $k$ holds
\begin{equation}
\label{eqn:descent}
    \mathcal{L}_{\rm o}(\Label^{(k+1)}; \Input) \leq \mathcal{L}_{\rm o}(\Label^{(k)}; \Input).
\end{equation}
Such descent methods are known to lead to convergence to a (local) minimum under appropriate assumptions, e.g., convexity of $ \mathcal{L}_{\rm o}$ in $\Label$ and convexity of the set $\LabelSpace$~\cite[Ch. 9]{boyd2004convex}.

Formally, the $k$th iterate is obtained via
\begin{equation}
    \Label^{(k+1)} = h_{\HypParam_k}(\Label^{(k)}; \Input),
    \label{eqn:IterativeMapping}
\end{equation}
where $h_{\HypParam_k}: \LabelSpace \times \InputSpace \mapsto \LabelSpace$ denotes the update function applied at iteration $k$. This update is governed by solver-specific parameters $\HypParam_k$ (e.g.,  step-sizes or momentum factors), referred to as  \emph{hyperparameters}, which guide the update steps but do not alter the optimization objective itself.   While often fixed to constant values, which are either hand-tuned or derived via theoretical convergence criteria, hyperparameters play a critical role in determining the convergence speed, runtime and {overall solution quality} of the optimizer.

\vspace{-0.1cm}
\subsection{Challenges of Rapid Optimization}
\label{sec:rapid_opt_challenges}
\vspace{-0.1cm}
In many applications, optimization-based decision-making must be performed under tight latency constraints. These settings impose a fundamental challenge: while optimization solvers provide principled and accurate solutions, their iterative nature often renders them impractical for rapid deployment.

Classical iterative optimization methods face three main limitations when applied to time-sensitive decision problems:
\begin{enumerate}[label={\em C\arabic*},series=challenges]
    \item \label{itm:ManyIter} \textbf{Large number of iterations:} Reaching a solution that is sufficiently close to the optimizer of \eqref{eq:opt_problem} often requires a considerable number of iterations. The precise number of steps needed to achieve a target level of accuracy can vary significantly across contexts, and is typically not known in advance. Consequently, worst-case iteration budgets should be accounted for, potentially exceeding the permissible time budget in real-time systems.

    \item \label{itm:SeqComp} \textbf{Sequential computation:} Iterative solvers operate sequentially as in \eqref{eqn:IterativeMapping}, with each update depending on the result of the previous one. This inherent structure precludes parallelization across iterations, limiting the ability to exploit modern computational resources for acceleration and making latency tightly coupled to the number of steps executed.

    \item \label{itm:ComplexIter} \textbf{High per-iteration complexity:} Most descent methods rely on differentiating the objective function, typically through gradient computations. In many applications, such as large-scale matrix factorization or beamforming, the gradient expressions involve complex operations (e.g., matrix multiplications, inversions, or projections) that are computationally intensive. As a result, each iteration, i.e., the computation of $ h_{\HypParam_k}$ in  \eqref{eqn:IterativeMapping}, may be slow to compute, even if the total number of iterations is modest.
\end{enumerate}

Taken together, \ref{itm:ManyIter}-\ref{itm:ComplexIter} highlight the key bottlenecks of classical optimization solvers in latency-sensitive environments. They motivate the need for novel methodologies that can retain the principled nature of optimization-based decision-making while achieving fast inference through architectural and computational simplifications. In the following, we propose a learned approximated optimization framework that addresses these challenges by leveraging end-to-end deep learning tools while preserving the interpretability of iterative solvers.

 \vspace{-0.2cm}
\section{Learned Approximated Optimization}\label{sec:Deep} 

This section introduces our proposed methodology for \emph{learned approximated optimization}, which aims to overcome the limitations highlighted in~\ref{itm:ManyIter}--\ref{itm:ComplexIter} while preserving the interpretability and principled operation of classical iterative solvers. The approach is based on designing a trainable architecture that operates with a fixed number of iterations, where each iteration may involve low-complexity approximated computations. In the following, we first present the rationale underlying our approach (Subsection \ref{subsec:rationale}), then detail the proposed methodology (Subsection \ref{subsec:methodology}), and finally provide a discussion highlighting its scope, flexibility, and implementation aspects (Subsection \ref{ssec:discussion}).

\vspace{-0.1cm}
\subsection{Rationale}
\label{subsec:rationale}
\vspace{-0.1cm}
As noted in~\ref{itm:ManyIter} and~\ref{itm:SeqComp}, a major bottleneck in applying iterative optimization solvers to latency-sensitive applications stems from the variable and potentially large number of iterations required for convergence. To support rapid decision-making, it is thus essential to design optimization algorithms that operate with a fixed and relatively small number of iterations~$K$, chosen in advance. A key enabler for such fixed-depth operation is the framework of deep unfolding, which has been shown to preserve the interpretable structure of classical solvers while learning iteration-specific hyperparameters from data~\cite{shlezinger2022model}. In this paradigm, the iterative mapping~\eqref{eqn:IterativeMapping} is viewed as a $K$-layer neural network, with each layer corresponding to a single iteration whose parameters are tuned offline for the target domain. 
\textcolor{NewColor}{Interpretability here follows from the optimization-based structure, as each layer corresponds to an optimization step and produces an intermediate estimate of the decision variables. Consequently, the internal representations exchanged between layers correspond to progressively refined estimates of $\Label$.}
By casting the hyperparameters $\{\HypParam_k\}_{k=1}^K$ as the trainable parameters of a discriminative machine learning model~\cite{shlezinger2022discriminative}, one fully retains the interpretable operation of the original solver while achieving reliable performance within a small number of steps.

However, limiting the number of iterations alone does not fully resolve the latency bottleneck. As discussed in~\ref{itm:ComplexIter}, the complexity of each iteration may itself be prohibitive, particularly when involving costly gradient evaluations or matrix operations. To address this, we take a further step by replacing selected iterations with low-complexity approximated computations, designed to significantly reduce the per-iteration runtime. Of course, such approximations may degrade performance if left uncompensated. To maintain the reliability of the resulting optimization process, we extend the hyperparameter space of the unfolded solver in a way that does not increase computational burden (e.g., by using element-wise or structure-aware parameterizations), and we train the full approximated optimization procedure end-to-end using data. In this manner, we obtain an optimization model that is both interpretable and efficient, and can achieve strong performance with low latency and low computational overhead.

\vspace{-0.1cm}
\subsection{Methodology}
\label{subsec:methodology}
\vspace{-0.1cm}
\subsubsection{Increased Parameterization}
Our proposed inference rule is based on an unfolded and approximated iterative optimizer. We begin by fixing the number of iterations to a small, predetermined value $K$, enabling bounded and predictable runtime suitable for real-time applications. In contrast to classical solvers that often utilize shared or scalar hyperparameters across iterations, we increase the expressiveness of the optimizer by introducing \emph{iteration-specific}, \emph{extended hyperparameters}. We denote the full set of hyperparameters for iteration $k$ by $\Theta_k$. 

An important design choice in our methodology lies in how we parameterize the learnable hyperparameters $\Theta_k$ at each iteration. A key insight is that increasing the expressiveness of these hyperparameters need not come at the cost of increased computational complexity.  For instance, a parameterization employed throughout our case studies replaces scalar step-sizes with element-wise vector-valued parameters that enable finer-grained control in gradient-based optimization. As we show in the following running example (as well as in the case studies in Sections~\ref{sec:HBF}-\ref{sec:PCA}), such abstraction can be designed to avoid increasing complexity, and preserve the interpretable operation of such optimizers as descent methods.

{\bf Example (Gradient Descent).} 
As a running example for our methodology, we consider a basic first-order optimizer based on a standard gradient-based update rule:
\begin{equation}
\Label^{(k+1)} =  h_{\HypParam_k}(\Label^{(k)}; \Input) = \Label^{(k)} - \eta_k \cdot \nabla_{\Label} \mathcal{L}_{\rm o}(\Label^{(k)}; \Input),
\label{eqn:ElementWiseGD_1}
\end{equation}
where $\eta_k$ is a scalar step-size, i.e., $\HypParam_k = \eta_k$. Replacing this scalar with a multivariate step-size $\boldsymbol{\eta}_k$ of the same shape as $\Label$, namely, $\Theta_k = \boldsymbol{\eta}_k$, yields:
\begin{equation}
\Label^{(k+1)} =  h_{\Theta_k}(\Label^{(k)}; \Input) =  \Label^{(k)} - \bm{\eta}_k \odot \nabla_{\Label} \mathcal{L}_{\rm o}(\Label^{(k)}; \Input).
\label{eqn:ElementWiseGD}
\end{equation}
 This formulation increases the flexibility in learning iteration-specific dynamics, without increasing the number of floating-point operations, as the element-wise product is of the same computational order as scalar multiplication. This approach is reminiscent of variable metric algorithms with diagonal preconditioners. 
 In addition, it also preserves the semantics of descent-based optimization under mild regularity assumptions, as stated in the following proposition.
\begin{proposition}
\label{prop:elementwise_descent}
Let $\mathcal{L}_{\rm o}(\Label; \Input)$ be a differentiable objective function, and let  $\bm{\eta}_k$ be comprised of positive step-sizes.
Then the update \eqref{eqn:ElementWiseGD} yields a decrease in the objective value as in \eqref{eqn:descent} to first order, i.e., the method retains the descent property under sufficiently small element-wise step-sizes.
\end{proposition}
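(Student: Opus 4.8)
The plan is to perform a first-order Taylor expansion of the objective about the current iterate $\Label^{(k)}$ and substitute the proposed element-wise update. Concretely, writing $\Delta^{(k)} = \Label^{(k+1)} - \Label^{(k)} = -\bm{\eta}_k \odot \nabla_{\Label}\mathcal{L}_{\rm o}(\Label^{(k)};\Input)$, differentiability of $\mathcal{L}_{\rm o}$ gives
\begin{equation}
\mathcal{L}_{\rm o}(\Label^{(k+1)};\Input) = \mathcal{L}_{\rm o}(\Label^{(k)};\Input) + \langle \nabla_{\Label}\mathcal{L}_{\rm o}(\Label^{(k)};\Input), \Delta^{(k)}\rangle + o(\|\Delta^{(k)}\|).
\label{eqn:taylor_plan}
\end{equation}
First I would compute the inner-product term explicitly. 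Denoting $\myVec{g} = \nabla_{\Label}\mathcal{L}_{\rm o}(\Label^{(k)};\Input)$, the Hadamard structure of the update means the first-order change equals $-\langle \myVec{g}, \bm{\eta}_k \odot \myVec{g}\rangle = -\sum_i [\bm{\eta}_k]_i\,[\myVec{g}]_i^2$.

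The key step is then to observe that because every entry $[\bm{\eta}_k]_i$ is strictly positive by hypothesis, each summand $[\bm{\eta}_k]_i\,[\myVec{g}]_i^2$ is nonnegative, so the entire first-order term is $\leq 0$, and strictly negative whenever $\myVec{g}\neq \myVec{0}$ (i.e. away from a stationary point). This is exactly what distinguishes the element-wise step-size from an arbitrary diagonal preconditioner: positivity of each coordinate guarantees the search direction $-\bm{\eta}_k\odot\myVec{g}$ has nonnegative correlation with the steepest-descent direction $-\myVec{g}$, preserving the descent semantics. I would state this as the inequality $\langle \myVec{g}, \Delta^{(k)}\rangle \leq 0$, establishing \eqref{eqn:descent} to first order as claimed.

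To make the ``sufficiently small element-wise step-sizes'' qualifier rigorous rather than purely formal, I would then control the remainder: scaling $\bm{\eta}_k = t\,\bm{\eta}$ for a fixed positive direction $\bm{\eta}$ and a scalar $t>0$ makes $\|\Delta^{(k)}\| = O(t)$, so the leading term is $-t\sum_i[\bm{\eta}]_i[\myVec{g}]_i^2 = \Theta(t)$ while the remainder is $o(t)$; hence for $t$ small enough the negative first-order term dominates and $\mathcal{L}_{\rm o}(\Label^{(k+1)};\Input) < \mathcal{L}_{\rm o}(\Label^{(k)};\Input)$ holds with strict inequality whenever $\myVec{g}\neq\myVec{0}$.

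The main obstacle I anticipate is not the algebra but the precise scope of the claim: the proposition only asserts descent \emph{to first order}, so I must be careful to state what ``first order'' buys us. With mere differentiability the remainder is $o(\|\Delta^{(k)}\|)$ and the strict decrease is only guaranteed for small enough $t$; to get a uniform, non-asymptotic step-size bound one would need an additional smoothness assumption such as Lipschitz continuity of $\nabla_{\Label}\mathcal{L}_{\rm o}$, which the hypotheses here do not provide. I would therefore keep the statement at the first-order level, making explicit that positivity of $\bm{\eta}_k$ yields a valid descent direction and that the descent inequality follows for sufficiently small steps, rather than overclaiming a fixed admissible step-size.
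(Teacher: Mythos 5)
Your proof is correct and follows essentially the same route as the paper's: a first-order Taylor expansion followed by the observation that the first-order term equals $-\sum_i [\bm{\eta}_k]_i [\myVec{g}]_i^2 \le 0$ because the entries of $\bm{\eta}_k$ are positive (the paper phrases this via positive-definiteness of ${\rm diag}(\bm{\eta}_k)$). Your additional care in controlling the $o(\|\Delta^{(k)}\|)$ remainder by scaling the step-size is a modest strengthening of the paper's informal ``holds for sufficiently small step sizes,'' but the core argument is identical.
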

\begin{proof}
By the first-order Taylor approximation:
\begin{align*} 
&\mathcal{L}_{\rm o}(\Label^{(k+1)}; \Input)
\stackrel{(a)}{\approx} \mathcal{L}_{\rm o}(\Label^{(k)};\Input) \!+\! \nabla_{\Label} \mathcal{L}_{\rm o}(\Label^{(k)};\Input)^T (\Label^{(k+1)} \!- \!\Label^{(k)}) \nonumber \\
&\stackrel{(b)}{=}  \mathcal{L}_{\rm o}(\Label^{(k)};\Input) - \nabla_{\Label} \mathcal{L}_{\rm o}(\Label^{(k)};\Input)^T {\rm diag} (\bm{\eta}_k)   \nabla_{\Label} \mathcal{L}_{\rm o}(\Label^{(k)};\Input),
\end{align*}
where the approximation in $(a)$ holds for sufficiently small step-sizes, and $(b)$ is obtained by substituting the update step from \eqref{eqn:ElementWiseGD} and rearranging, with ${\rm diag}(\bm{\eta}_k)$ being a diagonal matrix whose diagonal entries are $\bm{\eta}_k$. 
Since these entries are positive, then this diagonal matrix is positive-definite, and thus, $\mathcal{L}_{\rm o}(\Label^{(k+1)};\Input) \leq \mathcal{L}_{\rm o}(\Label^{(k)};\Input)$, i.e., \eqref{eqn:ElementWiseGD} is a descent method.
\end{proof}
\textcolor{NewColor}{
Proposition~\ref{prop:elementwise_descent} demonstrates that, using our running example of gradient descent, increased abstractness through element-wise parameterization can be made to $(i)$ preserve computational complexity; $(ii)$ not compromise theoretical guarantees such as descent behavior; and $(iii)$ provide richer representational power to absorb approximations. It does not provide finite-iteration performance guarantees; rather, performance within a fixed number of iterations $K$ is achieved via the data-driven training procedure detailed in the sequel.
}

\subsubsection{Constructing the Approximated Iterative Optimizer}
\label{sssec:Theo2}
To reduce the computational burden of the resulting optimizer, we select a subset of the iteration indices, denoted by $\mathcal{K}^{\rm approx} \subseteq \{1, 2, \ldots, K\}\triangleq \mySet{K}$. In these iterations, we replace the standard update rule $h_{\Theta_k}$ from \eqref{eqn:IterativeMapping} with a low-complexity approximated computation, denoted $\hat{h}_{\Theta_k}$. The resulting decision rule is recursively given by the output of the $K$th iteration, with 
\begin{equation}
    \label{eqn:approxOpt}
     \Label^{(k+1)} = 
     \begin{cases}
         \hat{h}_{\Theta_k}(\Label^{(k)}; \Input) & k \in \mathcal{K}^{\rm approx}, \\
         {h}_{\Theta_k}(\Label^{(k)}; \Input) & k \notin \mathcal{K}^{\rm approx}.
     \end{cases}
\end{equation}

The approximated mappings are designed to mimic the behavior of their full-precision counterparts while avoiding the need for expensive operations such as matrix inversions, gradient evaluations, or projections. In the subsequent case studies, we provide concrete examples of such approximations, including skip updates, momentum-based surrogates, and fixed-pattern substitutions. Specifically, for the running example of gradient descent, we next prove that such approximations can be made to have only a mild effect on the error bounds of the original optimizers.

{\bf Example (Gradient Descent).} 
To illustrate, consider again the gradient-based update rule of \eqref{eqn:ElementWiseGD_1}, where for a set of iterations the objective gradients $\nabla \mathcal{L}_{\rm o}(\Label^{(k)}; \Input)$ are replaced with an approximated computation, denoted $\tilde{\myVec{g}}^{(k)}(\Label^{(k)}; \Input)$, namely:
\begin{equation}
\label{eqn:selective_update}
    \Label^{(k+1)} = \Label^{(k)} - 
    \begin{cases}
       \bm{\eta}_k \odot \tilde{\myVec{g}}^{(k)}(\Label^{(k)}; \Input) &  k \in \mathcal{K}^{\rm approx}, \\
        \bm{\eta}_k \odot  \nabla \mathcal{L}_{\rm o}(\Label^{(k)}; \Input) & \text{otherwise}.
    \end{cases}
\end{equation} 
Accordingly, $\hat{h}_{\Theta_k}(\Label^{(k)}; \Input) =  \Label^{(k)} - \bm{\eta}_k \odot \tilde{\myVec{g}}^{(k)}(\Label^{(k)}; \Input)$. 
Under conventional assumptions used in analysis of gradient-based optimization~\cite{garrigos2023handbook}, we can rigorously characterize the impact of such selective approximations on the error of the optimizer, as stated in the following proposition.


\begin{proposition}
\label{prop:approx_gd_partial}
Assume that $\mathcal{L}_{\rm o}(\Label;\Input)$ is $\mu$-strongly convex and $L$-smooth with $L \geq \mu > 0$.
Consider the recursion in \eqref{eqn:selective_update}, 
and let the step-size vector $\bm{\eta}_k$ satisfy $0< \underline{\eta}_k\le [\bm{\eta}_k]_i \le \bar{\eta}_k < \frac{2}{L}$ for all $i$, 
and that for every $k \in \mathcal{K}^{\rm approx}$,
\begin{equation}
\left\| {\bm{\eta}}_k \odot \left(\tilde{\myVec{g}}^{(k)}(\Label^{(k)};\Input) - \nabla \mathcal{L}_{\rm o}(\Label^{(k)};\Input)\right) \right\|_2 \leq \delta_k ,
\label{eq:error_delta}
\end{equation}
while for $k \notin \mathcal{K}^{\rm approx}$ we set $\delta_k=0$.
Then, for any minimizer $\Label^\star$ of $\mathcal{L}_{\rm o}(\cdot;\Input)$, the objective after $K$ steps satisfies:
\begin{align}
\mathcal{L}_{\rm o}(\Label^{(K)};\Input)\!-\!\mathcal{L}_{\rm o}(\Label^\star;\Input)
\leq
&(1\!-\! \mu c)^K\!\left(\mathcal{L}_{\rm o}(\Label^{(0)};\Input)\!-\!\mathcal{L}_{\rm o}(\Label^\star;\Input)\right) \notag \\
&+\sum_{k=0}^{K-1}(1-\mu c)^{K-k-1} C\,\delta_k^2 ,
\label{eqn:bound_partial}
\end{align}
where $c \triangleq \inf_k \underline{\eta}_k\!\left(1-\frac{L}{2}\sup_k \bar{\eta}_k\right)$, $C \triangleq \frac{L}{2} + \frac{(1+L\sup_k \bar{\eta}_k)^2}{2c}$.
\end{proposition}

\begin{proof}
   The proof is detailed in Appendix~\ref{app:proof1}. 
\end{proof}

\textcolor{NewColor}{Proposition~\ref{prop:approx_gd_partial} characterizes how approximation errors accumulate in the gradient-descent example and how their impact is modulated by the step-size parameters. Specifically,}
 the first term in \eqref{eqn:bound_partial} corresponds to the standard convergence rate of gradient descent~\cite[Thm. 3.4]{garrigos2023handbook}.
The second term in \eqref{eqn:bound_partial} accumulates errors that are based on the deviations from the true gradient only over the selected iterations $\mathcal{K}^{\rm approx}$.
This result shows that the final error scales with the magnitude of the skipped gradients and the approximation error $\delta_k$. The proposition indicates the potential usefulness of deliberately inducing approximated computations (for complexity reduction) with increased parameterization, since as long as the skipped gradients (balanced by the increased hyperparameters $\bm \eta _k$) are small, the error remains bounded and the method remains effective.  \textcolor{NewColor}{Proposition~\ref{prop:approx_gd_partial}  provides theoretical insight into the interplay between approximation and parameterization, rather than guaranteeing bounded error for arbitrary approximation levels; using deep unfolding techniques, the hyperparameters are learned from data to compensate for approximation-induced distortions, as detailed in the following subsection.}

\subsubsection{Training via Empirical Risk Minimization}\label{ssec:training_setups}
To learn the parameters $\Theta \triangleq \{\Theta_k\}_{k=1}^K$ of the approximated  optimizer, we assume access to a representative dataset $\mathcal{D}$. The  training procedure depends on the nature of the available supervision.

\smallskip
    \textbf{Unsupervised setting:} When the dataset is given by $\mathcal{D} = \{\Input_t\}$, and no ground-truth decisions are available, we guide the training process using the same objective function $\mathcal{L}_{\rm o}$ used in the original optimization formulation. In this case, the empirical risk is defined as
    \begin{equation}
        \mathcal{L}_{\mathcal{D}}(\Theta) = \frac{1}{|\mathcal{D}|} \sum_{\Input_t \in \mathcal{D}} \mathcal{L}_{\rm o} \left( \hat{\Label}(\Input_t | \Theta); \Input_t \right),
        \label{eqn:lossUnsup}
    \end{equation}
    where $\hat{\Label}(\Input_t | \Theta)$ denotes the output of the unfolded approximated optimizer with parameters $\Theta$ applied to input $\Input_t$.

\smallskip
    \textbf{Supervised setting:} When the dataset contains both contexts and target labels, i.e., $\mathcal{D} = \{(\Input_t, \Label_t)\}$, we are not restricted to the proxy optimization objective $\mathcal{L}_{\rm o}$, and can directly train the optimizer to minimize a task-oriented loss that evaluates the quality of the predicted decision relative to the target label. In this case, the empirical risk is defined as
    \begin{equation}
        \mathcal{L}_{\mathcal{D}}(\Theta) = \frac{1}{|\mathcal{D}|} \sum_{(\Input_t, \Label_t) \in \mathcal{D}} \mathcal{L}_{\rm task} \left( \hat{\Label}(\Input_t | \Theta), \Label_t \right),
        \label{eqn:losssup}
    \end{equation}
    where $\mathcal{L}_{\rm task}(\cdot, \cdot)$ is a supervised loss function  chosen according to the target application. Supervised learning is most useful in settings where  the optimization objective $\mathcal{L}_{\rm o}$ is a mathematically formulated proxy of the actual system task.

\smallskip
In both cases, the unfolded approximated optimizer is trained offline using standard gradient-based methods to minimize the empirical risk over $\Theta$. This allows the model to compensate for the approximations in both iteration count and complexity, yielding an efficient and reliable learned decision rule. 
\textcolor{NewColor}{The ability of empirical risk minimization to compensate for the performance loss introduced by computational approximations stems from two main factors: $(i)$ the increased parameterization compared to the original iterative solver, which provides additional degrees of freedom that enable the learned optimizer to minimize the influence of approximation error terms in the error-bound expansion (as shown in Proposition~\ref{prop:approx_gd_partial} for gradient descent); and $(ii)$ the parameters are learned end-to-end via empirical risk minimization, where the loss evaluates the quality of the final output of the $K$-layer architecture, This allows the training process to identify parameter configurations that learn the implicit bias of the data distribution, effectively 'weighting' the update steps to correct distortions in the intermediate update trajectory caused by the approximated computations.}
The formulation of training an approximated unfolded optimizer via conventional deep learning methods, i.e., mini-batch \ac{sgd}, is stated in Algorithm~\ref{alg:APGA_train}.

\begin{algorithm}
    \caption{Training approximated unfolded optimizer via mini-batch \ac{sgd}}
    \label{alg:APGA_train} 
    \SetKwInOut{Initialization}{Init}
    \Initialization{Select iterative optimizer \eqref{eqn:IterativeMapping}\;
                   $\quad$ Fix iterations $K$ and approximated iterations $\mySet{K}^{\rm approx}$\;
                   $\quad$ Extend hyperparameters $\Theta$ and init as fixed\; 
                   $\quad$ Set learning rate $\rho$}
    \SetKwInOut{Input}{Input} 
    \Input{Training set  $\mathcal{D}$ (labeled or unlabeled)}   
    {
        \For{${\rm epoch} = 0, 1, \ldots, {\rm epoch}_{\max}-1$}{%
                    Randomly divide  $\mathcal{D}$ into $Q$ batches $\{\mathcal{D}_q\}_{q=1}^Q$;
                    
                    \For{$q = 1, \ldots, Q$}{
  
                    Compute  empirical risk $\mySet{L}_{\mySet{D}_q}$ via \eqref{eqn:lossUnsup} or \eqref{eqn:losssup};
                    
                    Update  $\Theta\leftarrow \Theta - \rho\nabla_{\Theta}\mathcal{L}_{\mySet{D}_q}(\Theta)$; \label{stp:update1}
                    }
                    
                    }
        \KwRet{$\Theta$}
  }
\end{algorithm}

\vspace{-0.6cm}
\subsection{Discussion}\label{ssec:discussion}
\vspace{-0.1cm}
The proposed methodology introduces a novel and flexible approach to accelerate iterative optimization by combining deep unfolding with approximated computations. While conventional uses of deep unfolding aim to reduce the number of iterations by learning solver hyperparameters from data, our framework extends this idea by deliberately introducing low-complexity approximations within the iterations themselves. These approximations are typically regarded as limitations that degrade performance; however, we show that when integrated into an unfolded architecture and paired with appropriate parameterization, the resulting model can be trained to absorb and compensate for the induced mismatches. In doing so, our methodology directly addresses all three challenges \ref{itm:ManyIter}-\ref{itm:ComplexIter}. This dual approximation on both iteration depth and per-iteration complexity is made tractable and effective through the trainability offered by deep unfolding. The increased abstractness in adding hyperparameters is designed not to come at the cost of increased complexity. Moreover, different abstractions, e.g., the usage of a per-parameter step-size rather than scalar ones as we do in our case studies, in fact preserve the property of descent methods as in \eqref{eqn:descent}~\cite{khobahi2021lord}.

\textcolor{NewColor}{Like most learning-based methods, unfolded optimizers may require retraining when the system configuration changes, particularly when their parameters depend on the data size (as is the case with extended hyperparameters). However, due to their structured origin in iterative optimization algorithms, unfolded architectures often exhibit stronger scalability properties than generic \acp{dnn}. In particular, as the trainable parameters correspond to solver hyperparameters, these parameters can often be extended across problem dimensions through simple scaling or replication strategies~\cite{noah2024distributed} to enable zero-shot generalization to larger problem instances. Alternatively, the modular structure of unfolded optimizers can be combined with hypernetworks that generate hyperparameters conditioned on the system configuration as proposed in \cite{raviv2025modular}, enabling elastic architectures capable of adapting to varying problem dimensions without retraining. These extensions are left for future study.}

\textcolor{NewColor}{The proposed framework is intended as a design methodology rather than an off-the-shelf `plug-and-play' algorithm.  Because the sources of computational complexity vary significantly across different iterative solvers  applications, the choice of which operations to approximate and how to approximate them must  be tailored on a case-by-case basis. Accordingly, many of the practical considerations and benefits of our framework are revealed only when applied to specific use cases.}  In the following sections, we illustrate how our general methodology can be instantiated in two representative signal processing tasks. These case studies demonstrate how approximations can be selected, quantify their computational impact, and empirically validate the ability of the learned optimizer to preserve or even improve performance.



 \vspace{-0.2cm}
\section{Case Study 1: Hybrid Beamforming}
\label{sec:HBF}
This section presents an application of our methodology within an unsupervised learning setup, as introduced in Subsection~\ref{ssec:training_setups}, focusing on the task of {\em hybrid beamforming}~\cite{shlezinger2024artificial}, which is considered key in enabling high-frequency large-scale \ac{mimo} systems~\cite{elbir2022twenty}. Our motivation for using hybrid beamforming as a case study stems not only from its technological importance, but also from its need for rapid optimization solvers, since: 
$(i)$ The beamforming task is typically represented as an optimization problem that is tackled with iterative solvers~\cite{yu2016alternating,sohrabi2016hybrid,gong2019rf,qiao2020alternating}; 
$(ii)$ Hybrid beamformers are tuned for a given \ac{csi},  which can change rapidly (over $10^3$ times a second~\cite{shlezinger2024artificial});
$(iii)$ While unfolded optimizers have been shown to yield suitable hybrid beamformers within a few iterations~\cite{lavi2023learn,nguyen2023deep,balevi2021unfolded,levy2025rapid,shi2022deep},  each iteration is often computationally intense, especially in wideband regimes, such that the optimizer cannot be applied within a coherence duration. 
Accordingly, this case study represents an optimization setup where there is a concrete need to simultaneously reduce the number of iterations and alleviate the burden within the iterations. \textcolor{NewColor}{While this case study focuses on alternating optimization based on \ac{pga}, selected as a representative baseline due to its widespread use, its established empirical performance, and its suitability for deep unfolding~\cite{lavi2023learn}, the proposed methodology is generic and can be applied to alternative hybrid beamforming optimizers as well. Our emphasis is not on the specific choice of solver, but on demonstrating how iterative algorithms can be transformed into reduced-complexity, trainable models via approximated deep unfolding.}

To present the case study, we first formulate the task in Subsection~\ref{ssec:HBF_model}, followed by the description of the \ac{pga} optimizer for this task in Subsection~\ref{ssec:HBF_opt}. We explain how the proposed methodology is incorporated into the resulting algorithm and discuss its performance in Subsections~\ref{ssec:HBF_Lopt} and~\ref{ssec:HBF_Exp}, respectively.

\vspace{-0.1cm}
\subsection{Hybrid Beamforming 
Formulation}\label{ssec:HBF_model}
\vspace{-0.1cm}
\subsubsection{Channel Model}
Consider a single-cell downlink \ac{mimo} system featuring an $M$-antenna \ac{bs} and $N$ users. 
The \ac{bs} utilizes $B$ frequency bands, shared among all users in a non-orthogonal fashion. 
Let $\mathbf{s}_b \in \mathbb{C}^N$ denote the symbol vector transmitted at the $b$-th frequency bin, where $b \in \mathcal{B} \triangleq \{1, 2, \dots, B\}$. 
The symbols are assumed to be i.i.d. with equal power such that $\mathbb{E}[\mathbf{s}_b \mathbf{s}_b^H] = \frac{1}{N} \mathbf{I}_N$ for each $b \in \mathcal{B}$.

We precode the symbols $\mathbf{s}_b$ into the channel input $\PrecodedSymbol_b \in \mathbb{C}^M$, yielding the channel output:
	\begin{equation}
	\label{eqn:io relation}
	\myVec{y}_b = \myMat{H}_b \cdot \PrecodedSymbol_b + \myVec{n}_b \in\mathbb{C}^N.
	\end{equation}
Where $\myMat{H}_b\in\mathbb{C}^{N\times M}$ is the channel at the $b$th band, and $\myVec{n}_b\in\mathbb{C}^N$ is Gaussian noise with i.i.d. entries of variance $\sigma^2$. 

\subsubsection{Hybrid Beamforming}
With $L < M$ \ac{rf} chains at the \ac{bs}, a two-stage hybrid precoding scheme is employed for the symbols $\{\mathbf{s}_b\}_{b \in \mathcal{B}}$. 
In the first stage, a frequency-dependent digital precoder $\mathbf{W}_{d,b} \in \mathbb{C}^{L \times N}$ is applied. 
In the second stage, an analog precoder $\mathbf{W}_a \in \mathcal{A}$ combines these signals into the channel input $\PrecodedSymbol_b$. 
Following~\cite{shlezinger2024artificial}, $\mathbf{W}_a$ is frequency-invariant and consists of phase shifters restricted to the unit-magnitude set $\mathcal{A}$, namely:
\begin{equation}
	\label{eqn:ph_shift}
   \myMat{W}_a\in \mySet{A} = \big\{\myMat{A}\in \mathbb{C}^{M\times{L}}\big| |[\myMat{A}]_{m,l}| = 1, \quad \forall (m,l) \big\}.
\end{equation}

The precoded channel input for the $b$-th frequency bin is thus
\begin{equation}
    \PrecodedSymbol_b = \mathbf{W}_a \mathbf{W}_{d,b} \mathbf{s}_b.
\end{equation}
Based on \eqref{eqn:io relation}, the  channel input-output relationship is 
\begin{equation}
\label{eqn:b received signal}
\mathbf{y}_b = \mathbf{H}_b \mathbf{W}_a \mathbf{W}_{d,b} \mathbf{s}_b + \mathbf{n}_b, \quad \forall b \in \mathcal{B}.
\end{equation}
Considering the normalized symbol power, the hybrid precoders are subject to an average power constraint across all frequency bins, which is formulated as
\begin{equation}
\label{eqn:Power}
 \frac{1}{B} \sum_{j=1}^{B} \|\mathbf{W}_{a}\mathbf{W}_{d,j} \|^2_F \leq N.
\end{equation}

\subsubsection{Problem Formulation}\label{ssec:problem_formulation}

The hybrid beamforming design objective is to determine the precoders $\mathbf{W}_a$ and $\{\mathbf{W}_{d,b}\}$ for a given set of \ac{csi} $\{\mathbf{H}_b\}$. Namely, the {\em decision} here is $\Label = [\myMat{W}_a,\{\myMat{W}_{d,b}\}]$, and the {\em context} is $\Input = \{\myMat{H}_b\}$, as illustrated in Fig.~\ref{fig:HBF}.
Due to its unsupervised setting, we utilize the objective function of the original optimization problem as our empirical risk as discussed in Section~\ref{ssec:training_setups}. Specifically, we focus on maximizing the sum-rate, a standard performance metric that is invariant of receiver-side processing. 
Assuming the \ac{bs} possesses perfect \ac{csi}, the achievable sum-rate is given by~\cite{shen2007sum}:
	\begin{align}
	&{R}\left(\myMat{W}_a , \{\myMat{W}_{d,b}\}_{b\in \mySet{B}}; \{\myMat{H}_b\}_{b\in \mySet{B}}\right) 
	 \notag  \\ & = \frac{1}{B} \sum_{b=1}^{B} \log \left|\textbf{I}_N \!+\! \frac{1}{N \sigma^2} \myMat{H}_b\myMat{W}_a\myMat{W}_{d,b} \myMat{W}_{d,b}^H\myMat{W}_a^H\myMat{H}_b^H\right|. \label{eqn:achievable rate}
	\end{align}
Consequently, the task of determining the hybrid precoders can be cast as the following optimization problem:
\begin{align}
\label{eqn:main problem}
&\argmax_{\myMat{W}_a\in \mySet{A} , \{\myMat{W}_{d,b}\}_{b\in \mySet{B}}:~~ {\rm s.t.}~~\eqref{eqn:Power}} R\left(\myMat{W}_a , \{\myMat{W}_{d,b}\}, \{\myMat{H}_b\}\right). 
\end{align}
For $\mySet{S} = \{\myMat{W}_a\in \mySet{A} , \{\myMat{W}_{d,b}\}_{b\in \mySet{B}}: {\rm s.t.}~\eqref{eqn:Power}\}$ and $\mathcal{L}_{\rm o}(\Label; \Input) = - R(\Label; \Input)$, Eq.~\eqref{eqn:main problem} specializes the generic formulation in \eqref{eq:opt_problem}.

\vspace{-0.1cm}
\subsection{Iterative Optimizer}\label{ssec:HBF_opt}
\vspace{-0.1cm}
The constrained maximization problem \eqref{eqn:main problem} can be addressed, among other techniques, using \ac{pga} combined with alternating optimization~\cite{lavi2023learn}. 
In this framework, each iteration alternates between optimizing the analog precoder $\mathbf{W}_a$ with fixed $\{\mathbf{W}_{d,b}\}$, and subsequently updating the digital precoders while $\mathbf{W}_a$ remains constant. 
Each update is followed by a projection step to ensure that $\mathbf{W}_a \in \mathcal{A}$ and the power constraint in \eqref{eqn:Power} are satisfied. 

Specifically, the $k$-th iteration updates $\mathbf{W}_a$ via
\begin{subequations}
    \label{eqn:PGAsteps}
    \begin{align}
	{\myMat{W}_a^{(k+1)}}
	= &\Pi_{\mySet{A}}\Big\{\myMat{W}_a^{(k)}
	  \notag \\&+\mu_a^{(k)}\frac{\partial}{\partial\myMat{W}_a}
	{R}(\myMat{W}_a^{(k)}, \{\myMat{W}_{d,b}^{(k)}\}, \{\myMat{H}_b\})\Big\},\label{eqn:Wa_step_proj}
    \end{align}
while each digital precoder  $\myMat{W}_{d,b}$ is updated as
 	\begin{align}
	\hspace{-0.1cm} {\myMat{W}}_{d,b}^{(k+1)} 
	= &\Pi_{\mySet{P}}\Big\{\myMat{W}_{d,b}^{(k)}
  \notag \\&+ \mu_{d,b}^{(k)}\frac{\partial}{\partial\myMat{W}_{d,b}}
	{R}\Big(\myMat{W}_a^{(k\!+\!1)} , \{\myMat{W}_{d,b}^{(k)}\}, \{\myMat{H}_b\} \Big)\Big\}.  \hspace{-0.25cm} 	\label{eqn:gr_Wdb_k_step}  
	\end{align}
\end{subequations} 
The hyperparameters  $\HypParam_k = [\mu_a^{(k)},\{\mu_{d,b}^{(k)}\}]$ are scalar positive step-sizes, while $\Pi_{\mySet{A}}$ and $\Pi_{\mySet{P}}$ are projection operators asserting that  $\myMat{W}_a \in \mySet{A}$ and \eqref{eqn:Power} respectively hold. 
As shown in \cite{lavi2023learn}, by defining $\tilde{\myMat{H}}_b \triangleq \sqrt{\frac{1}{N \sigma^2}}\myMat{H}_b$ and $\myMat{G}_b(\myMat{W}_a,\myMat{W}_{d,b}, \myMat{H}_b) \triangleq  (\myMat{I}_N +  \tilde{\myMat{H}}_b\myMat{W}_a\myMat{W}_{d,b} \myMat{W}_{d,b}^{H}\myMat{W}_a^H \tilde{\myMat{H}}_b^H)$, the rate gradients 
are given by
\begin{subequations}
 \label{eqn:Gradients}
\begin{align}
	&\hspace{-0.3cm}\frac{\partial}{\partial\myMat{W}_a}
	{R}(\myMat{W}_a  , \{\myMat{W}_{d,b}\}, \{\myMat{H}_b\} ) 
	 \nonumber \\ \hspace{-0.8cm}=&\frac{1}{B} \sum_{b=1}^{B}\tilde{\myMat{H}}_b^T\textbf{G}_b(\myMat{W}_a ,\myMat{W}_{d,b} , \myMat{H}_b)^{-T} \tilde{\myMat{H}}_b^*
	\myMat{W}_a^{*} \myMat{W}_{d,b}^{*} \myMat{W}_{d,b}^T,\label{eqn:gr_Wa_k_calc}  
 \end{align}
 and
  	\begin{align}
	&\frac{\partial  }{\partial\myMat{W}_{d,b}}	{R}(\myMat{W}_a, \{\myMat{W}_{d,b}\}, \{\myMat{H}_b\})  \nonumber \\ &= \frac{1}{B}\myMat{W}_a^T\tilde{\myMat{H}}_b^T\textbf{G}_b(\myMat{W}_a,\myMat{W}_{d,b}, \myMat{H}_b)^{-T}\tilde{\myMat{H}}_b^* \myMat{W}_a^{*} \myMat{W}_{d,b}^{*}. \label{eqn:gr_Wd_k_calc}
	\end{align}
 \end{subequations}
 \ac{pga} sets hybrid precoders by repeating \eqref{eqn:PGAsteps} until convergence.

\begin{figure}
    \centering
    \includegraphics[width=\columnwidth]{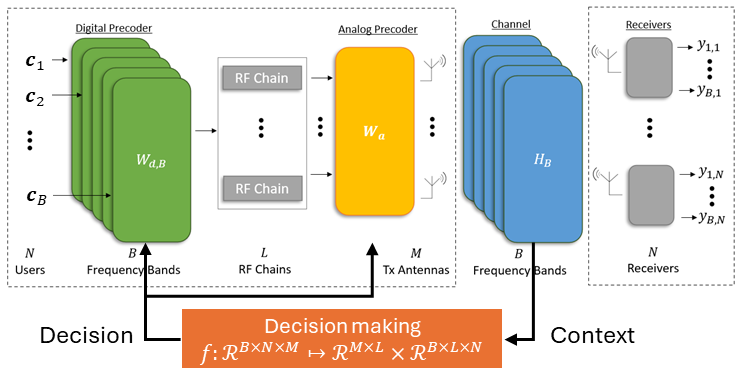}
    \vspace{-0.4cm}
    \caption{Hybrid beamforming case study illustration}
    \label{fig:HBF}
     \vspace{-0.4cm}
\end{figure}

\vspace{-0.25cm}
\subsection{Learned Approximated Optimizer}\label{ssec:HBF_Lopt}
\vspace{-0.1cm}
While \ac{pga} can yield suitable hybrid precoders, it is often impractical for real-time operation within a channel coherence duration. 
Specifically, the Hybrid Beamforming task is subject to the core limitations of classical iterative methods discussed in Section~\ref{ssec:problem_formulation};
it requires a large number of iterations to converge (\ref{itm:ManyIter}), and suffers from high per-iteration complexity (\ref{itm:ComplexIter}) due to gradient calculations. 
Following~\cite{lavi2023learn}, the computational cost per iteration is given by
\begin{equation}
    \mathcal{C}_{\rm PGA}^{\rm iter} \propto 2B \times (NML + N^3 + M^2L + L^2N).
    \label{eqn:Complexity}
\end{equation}

Following the methodology outlined in Subsection~\ref{subsec:methodology}, we propose \ac{apga} to solve \eqref{eqn:main problem}. 
\ac{apga} leverages data-driven parameterization to replace standard gradient computations with low-complexity approximations, thereby maintaining precoder performance while preserving the iterative structure of \ac{pga}. 
Specifically, we address \ref{itm:ManyIter} by fixing the number of iterations to a small integer $K$. 
Furthermore, to mitigate \ref{itm:ComplexIter}, we reduce the per-iteration complexity by employing the following approximations:
\begin{enumerate}[label={\em I\arabic*},series=innovations] \label{innovations}
  \item \label{innov_wa_approx} \textbf{$\frac{\partial{R}}{\partial\myMat{W}_a}$ Approximation} - 
  \
  For a subset of iterations, $\mySet{K}_{a}^{\rm approx}$, we replace   the gradient in  \eqref{eqn:gr_Wa_k_calc}, with a fixed matrix, using $\mathbf{1}_{M\times L}$, i.e., an all-ones   $M \times L$ matrix. 
  Accordingly, in the $k$th iteration we approximate $\frac{\partial{R}}{\partial\myMat{W}_a}$ as 
\begin{equation}
\label{eqn:WaApprox}
    \frac{\partial{\tilde{R}^{(k)}}}{\partial\myMat{W}_a} = \begin{cases}	    	
\mathbf{1}_{M\times L}, & k \in \mySet{K}_a^{\rm approx} \\
 \eqref{eqn:gr_Wa_k_calc} & k \notin \mySet{K}_a^{\rm approx}.
 \end{cases}
\end{equation} 
{{The all-ones matrix $\mathbf{1}_{M \times L}$ is used as a surrogate gradient direction because it provides a uniform positive update across all entries, and allows expressive learning when combined with element-wise step-sizes (see \ref{innov_stepping_size_matrix} below). Thus, it introduces a low-complexity yet trainable update mechanism that maintains flexibility across all dimensions.}}
  \item \label{innov_wd_approx} \textbf{$\frac{\partial{R}}{\partial\myMat{W}_{d,b}}$ Approximation} - 
  Here, we fix a set of per-band iterations $\mySet{K}_{d,b}^{\rm approx}$ for each $b \in \mySet{B}$. For these iterations, \ac{apga} uses the gradient of the previous iteration as a form of momentum.   Accordingly, in the $k$th iteration we approximate $\frac{\partial{R}}{\partial\myMat{W}_{d,b}}$ as $\frac{\partial \tilde{{R}}_b^{(k)}}{\partial\myMat{W}_{d,b}}$, taken  via \eqref{eqn:gr_Wd_k_calc} with 

  \begin{equation*}
      \tilde{{R}}_b^{(k)} =
    \begin{cases}
	{R}(\myMat{W}_a^{(k+1)} , \{\myMat{W}_{d,\tilde{b}}^{(k)}\}, \{\myMat{H}_{\tilde{b}}\}) & k \notin \mySet{K}_{d,b}^{\rm approx},\\
	{R}(\myMat{W}_a^{(k)} , \{\myMat{W}_{d,\tilde{b}}^{(k-1)}\}, \{\myMat{H}_{\tilde{b}}\})& k \in \mySet{K}_{d,b}^{\rm approx}.
    \end{cases} 
  \end{equation*}
\end{enumerate}
Approximated computations introduced in \ref{innov_wa_approx}-\ref{innov_wd_approx} implement \eqref{eqn:approxOpt}, with $\mySet{K}^{\rm approx} =\big(\cup_b \mySet{K}_{d.b}^{\rm Approx} \big) \cup \mySet{K}_a^{\rm Approx}$. 

As detailed in Subsection~\ref{subsec:methodology}, we also increase the parameterization. In \ac{apga}, this is reflected in the following aspect:
 \begin{enumerate}[label={\em I\arabic*},resume=innovations] 
   \item \label{innov_stepping_size_matrix}\textbf{Element-Wise Step-Sizes} - To provide additional degrees of freedom for coping with errors induced by approximations, we replace the scalar step-sizes $\HypParam_k$ with matrix ones, such that $\Theta_k = [\bm{\mu}_a^{(k)} , \{\bm{\mu}_{d,b}^{(k)}\}]$, with $\bm{\mu}_a^{(k)} \in \mathbb{R}^{M \times L}$  and $\bm{\mu}_{d,b}^{(k)} \in \mathbb{R}^{L \times N}$.  This abstraction does not increase complexity as it utilizes per-entry step-sizes, while reducing to scalar step-sizes by setting $\bm{\mu}_a^{(k)} =  {\mu}_a^{(k)} \myMat{1}_{M \times L}$ and $\bm{\mu}_{d,b}^{(k)}  = {\mu}_{d,b}^{(k)} \myMat{1}_{L \times N}$. 
\end{enumerate}

\textcolor{NewColor}{
We note that numerous combinations of approximations, as in \ref{innov_wa_approx}-\ref{innov_wd_approx}, are possible. The specific configuration in \ac{apga} was selected as it represents a robust operating point for the complexity-performance balance. This choice leverages the unit-magnitude constraints of the analog array and the temporal stability of digital precoding iterations to maintain sum-rate while minimizing computational overhead. However, the approximation setup can be tailored to meet the specific requirements of different deployment scenarios.}
Combining \ref{innov_wa_approx}-\ref{innov_stepping_size_matrix}, the resulting \ac{apga} (using the initiation proposed in \cite{lavi2023learn}) is summarized as Algorithm~\ref{alg:APGA_inference}. \ac{apga} is trained using  data  $\mySet{D}$ comprised of past channel realizations, written as 
$\mySet{D}=\{\{\tilde{\myMat{H}}_b^r\}_{b\in\mySet{B}}\}_{r=1}^{|\mathcal{D}|}$, i.e., there is no "ground-truth" precoders. Accordingly, training is carried out based on the unsupervised empirical risk~\eqref{eqn:lossUnsup}, using, e.g.,  Algorithm~\ref{alg:APGA_train}.   


\begin{algorithm}
    \caption{\ac{apga} for Hybrid Precoding}
    \label{alg:APGA_inference}
    \SetKwInOut{Initialization}{Init}
    \Initialization{Step-sizes $\{\bm{\mu}_{d,b}^{(k)}\}_{b\in \mySet{B}},\bm{\mu}_a^{(k)}$;  
     $\{\myMat{W}_{d,b}^{(0)}\}_{b\in \mySet{B}}$; \newline
    Approximation indices $\mySet{K}_a^{\rm approx}$, $\{\mySet{K}_{{d,}b}^{\rm approx}\}_{b\in\mySet{B}}$;   }
    \SetKwInOut{Input}{Input} 
    \Input{Channel matrices $\{\tilde{\myMat{H}}_b\}_{b\in \mySet{B}}$}  
    {
        $\myMat{W}_a^{(0)} \leftarrow$   $L$ right-singular vectors of $\frac{1}{B} \sum_b\tilde{\myMat{H}}_b$\;
        \For{$k = 0, 1, \ldots K-1$}{ 
                    
                     $\myMat{W}_a^{(k+1)} \leftarrow\Pi_{\mySet{A}}\left\{\myMat{W}_a^{(k)}
 	\!+ \!\bm{\mu}_a^{(k)}\odot     \frac{\partial{\tilde{R}^{(k)}}}{\partial\myMat{W}_a} \right\}$ ; \label{line:P_Wa}
                     
                    \For{$b = 1, \ldots, B$}{%
                     $\myMat{W}_{d,b}^{(k\!+\!1)} \!\leftarrow\Pi_{\mySet{P}}\left\{\myMat{W}_{{a}}^{(k)}
 	\!+ \!\bm{\mu}_{d,b}^{(k)}\odot     \frac{\partial{\tilde{R}_b^{(k)}}}{\partial\myMat{W}_{d,b}} \right\}$;  \label{line:P_Wd}
                    }
                
                }
        \KwRet{$\myMat{W}_{d,b}^{(K)}$, $\{\myMat{W}_{d,b}^{(K)}\}_{b\in \mySet{B}}$}
  }
\end{algorithm}

\ac{apga} implements an iterative optimizer characterized by a small number of iterations and low-complexity, approximated operations. By leveraging data-driven insights, it mitigates potential performance loss while significantly reducing computational overhead. Specifically, the average complexity per iteration is:
\begin{equation}
    \mathcal{C}_{\rm LAPGA}^{\rm iter} \approx \mathcal{C}_{\rm PGA}^{\rm iter} \cdot \left( 1 - \frac{ |\mySet{K}_a^{\rm approx}|}{2K} - \frac{\sum_{b\in \mySet{B}}|\mySet{K}_{d,b}^{\rm approx}|}{2B\cdot K} \right),
    \label{Eqn:CompRed}
\end{equation} 
which follows since the complexity in \eqref{eqn:Complexity} is obtained by equal contributions of the digital and analog  settings~\cite{lavi2023learn}.
The reduction in \eqref{Eqn:CompRed} is shown in Section~\ref{ssec:HBF_Exp} to have a significant impact on latency with minimal impact on performance. 

\vspace{-0.1cm}
\subsection{Numerical Evaluation}\label{ssec:HBF_Exp}
\vspace{-0.1cm}
In this section, we evaluate the performance of \ac{apga} through two numerical simulations. Our primary objective is to demonstrate that the proposed framework achieves performance comparable to non-approximated solvers while significantly reducing the online computational burden. To this end, we compare \ac{apga} against the following benchmarks, selected for their direct relationship to the \ac{pga} framework and their established efficacy in hybrid beamforming~\cite{shlezinger2024artificial}:
\begin{enumerate}[label={\em B\arabic*}]  
    \item \label{bnchmrk:mu_scalar} \textbf{Unfolded Full PGA:} The learned architecture from \cite{lavi2023learn} using exact gradients and trainable scalar step-sizes.
    \item \label{bnchmrk:classic_pga} \textbf{Classic PGA:} The traditional iterative solver using manually optimized, fixed step-sizes. 
\end{enumerate}

We first consider a small-scale \ac{mimo} configuration with $B=8$ frequency bins, $N=6$ users, $L=10$ RF chains, and $M=12$ antennas. 
The channel realizations are generated using the QuadRiGa model~\cite{jaeckel2014quadriga}, utilizing $1,000$ instances for training and $100$ for testing. 

We implement \ac{apga} by employing the approximations in \ref{innov_wa_approx}--\ref{innov_stepping_size_matrix}. 
Specifically, we set $\mathcal{K}_a^{\rm approx} = \{1,\dots,K\}$ and define $\mathcal{K}_{d,b}^{\rm approx}$ as the set of even iteration indices. 
This setup is further utilized to assess the individual impact of the proposed approximations. 
To this end, we compare \ac{apga} against the benchmarks \ref{bnchmrk:mu_scalar}--\ref{bnchmrk:classic_pga} as well as the following architectural variations:
\begin{enumerate}[label={\em V\arabic*}] 
    \item \label{var:mu_matrix_approx_dwa} $\mySet{K}_{d,b}^{\rm approx} \equiv \emptyset$, i.e.,  only \ref{innov_wa_approx}   and \ref{innov_stepping_size_matrix}.
    \item \label{var:mu_matrix} $\mySet{K}_{d,b}^{\rm approx} \equiv \emptyset$ and $\mySet{K}_a^{\rm approx} = \emptyset$, i.e.,  only  \ref{innov_stepping_size_matrix}.
    \item \label{var:mu_scalar_approx_dwa_approx_dwd} Scalar learned step-sizes, i.e.,  only  \ref{innov_wa_approx} and \ref{innov_wd_approx}.
    \item \label{var:mu_scalar_approx_dwa} Scalar learned step-sizes and $\mySet{K}_{d,b}^{\rm approx} \equiv \emptyset$, i.e.,  only  \ref{innov_wa_approx}.
\end{enumerate}

\begin{figure}
  \centering
  \vspace{-0.8cm}
  \includegraphics[width=0.5\textwidth]{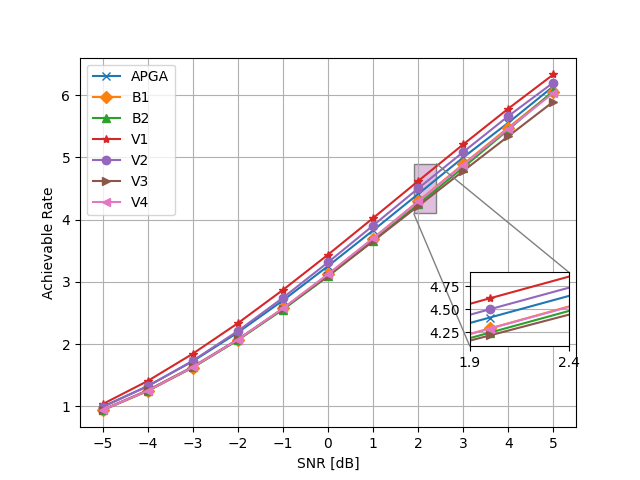}
  \vspace{-0.5cm}
  \caption{Rate vs \acs{snr}, small-scale \ac{mimo}.}
  \label{fig:small_scale}
\end{figure}

\begin{table}
\centering
 \fontsize{7.5pt}{10pt}\selectfont
\begin{tabular}{|c|c|c|c|c|c|c|}
\hline
 \cellcolor{lightgray}
{\textbf{Method}}&  \cellcolor{lightgray} \bm{$K$} &  \cellcolor{lightgray}{\ref{innov_wa_approx}}& \cellcolor{lightgray} \ref{innov_wd_approx}&  \cellcolor{lightgray}\ref{innov_stepping_size_matrix} & \cellcolor{lightgray} \textbf{Mean Sum-Rate} &  \cellcolor{lightgray}\textbf{\# Products}\\
\hline
\ac{apga}& 5 &\checkmark & \checkmark & \checkmark  &3.25 $\pm$ 0.167& 71,424  \\
\hline
\ref{bnchmrk:mu_scalar}& 5 &\ding{55} & \ding{55} & \ding{55} & 3.12 $\pm$ 0.177& 238,080   \\
\hline
\ref{bnchmrk:classic_pga}& 50 &\ding{55} & \ding{55} & \ding{55} & 3.08 $\pm$ 0.169 & 2,380,800 \\
\hline
\ref{var:mu_matrix_approx_dwa}& 5 &\checkmark & \ding{55} & \checkmark & 3.43 $\pm$ 0.157& 119,040\\
\hline
\ref{var:mu_matrix}& 5 &\ding{55} & \ding{55} & \checkmark & 3.31 $\pm$ 0.178& 238,080 \\
\hline
\ref{var:mu_scalar_approx_dwa_approx_dwd}& 5 &\checkmark & \checkmark & \ding{55} & 3.09 $\pm$ 0.165& 71,424 \\
\hline
\ref{var:mu_scalar_approx_dwa}& 5 &\checkmark &\ding{55}  & \ding{55} & 3.12 $\pm$ 0.179& 119,040 \\
\hline

\end{tabular}
\vspace{0.2cm}
\caption{
Comparison of achievable rates and computational complexity
 for $0$ dB \acs{snr} in small-scale \ac{mimo}.}
\label{table:methods_summary_small}
\vspace{-0.5cm}
\end{table}

Fig.~\ref{fig:small_scale} reports the average per-user sum-rate versus \ac{snr} (defined as $1/N\sigma^2$) for $K=5$ iterations, except for \ref{bnchmrk:classic_pga}, which uses $K=50$. 
We observe in Fig.~\ref{fig:small_scale} that deep unfolding leads to significant performance gains. 
Specifically, \ref{bnchmrk:mu_scalar} achieves better results than \ref{bnchmrk:classic_pga} while requiring $90\%$ fewer iterations, while \ac{apga} surpasses the rate obtained by \ref{bnchmrk:mu_scalar} despite requiring $70\%$ fewer calculations. 
Overall, our proposed algorithm outperforms \ref{bnchmrk:classic_pga} while reducing the computational burden by $89\%$--$97.3\%$, depending on the \ac{snr}.

To further quantify these gains, Table~\ref{table:methods_summary_small} summarizes the sum-rate (mean and standard deviation) at $0$ dB and the total number of real-valued products required for the precoding task. 
The table highlights the dramatic complexity reduction of \ac{apga}. 
It also demonstrates that the approximations \ref{innov_wa_approx}--\ref{innov_wd_approx} allow for various balances between performance and complexity, and that the expanded parameterization in \ref{innov_stepping_size_matrix} is crucial for learning to mitigate the errors induced by these approximations. 

Next, we evaluate the proposed framework in a larger \ac{mimo} setting, with $B=64$, $N=12$, $L=12$, and $M=32$. 
Building on the ablation study conducted in the previous section, we now consider \ac{apga} employing the full suite of approximations in \ref{innov_wa_approx}--\ref{innov_stepping_size_matrix}. 
We maintain $K=5$ iterations for the unfolded optimizers, whereas \ref{bnchmrk:classic_pga} is configured with $K = 100$ iterations to ensure convergence.
\begin{figure}
  \centering 
  \vspace{-0.7cm}
  \includegraphics[width=0.45\textwidth,trim={0.5cm 0 0.5cm 0.5cm},clip]{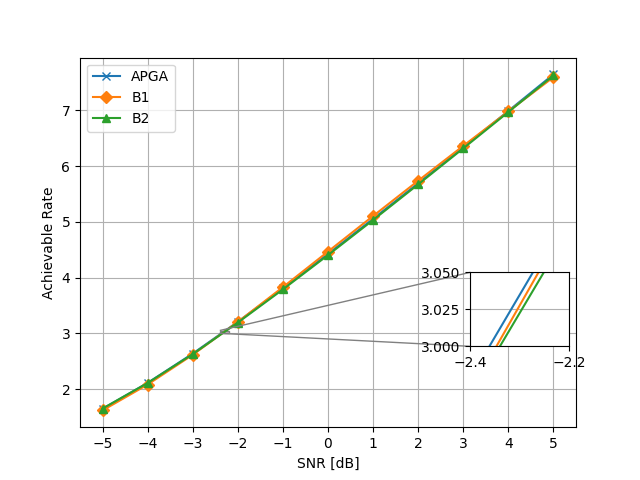}
  \vspace{-0.2cm}
  \caption{Rate vs. \ac{snr}, large-scale \ac{mimo}.}
  \label{fig:large_scale}
\end{figure}
The average per-user sum-rate versus \ac{snr} is illustrated in Fig.~\ref{fig:large_scale}, while Table~\ref{table:methods_summary_large} provides a detailed complexity analysis at $0$ dB. 
The results indicate that \ac{apga} achieves superior performance in both the low and high \ac{snr} regimes compared to the benchmarks. 
In the mid-range \ac{snr}, while \ref{bnchmrk:mu_scalar} exhibits slightly higher rates, it entails a $70\%$ increase in computational overhead. 
Furthermore, \ref{bnchmrk:classic_pga} is shown to be three orders of magnitude more complex than our proposed approach. 


\begin{table}
\centering
 \fontsize{7.5pt}{10pt}\selectfont
\begin{tabular}{|c|c|c|c|c|c|c|}
\hline
 \cellcolor{lightgray}
{\textbf{Method}} &  \cellcolor{lightgray} \bm{$K$} & \cellcolor{lightgray} {\ref{innov_wa_approx}}&  \cellcolor{lightgray} \ref{innov_wd_approx}&  \cellcolor{lightgray} \ref{innov_stepping_size_matrix} &  \cellcolor{lightgray} \textbf{Mean Sum-Rate} &   \cellcolor{lightgray} \textbf{\# Products}\\
\hline
\ac{apga}& 5 &\checkmark & \checkmark & \checkmark &  4.42 $\pm$ 0.092& 3,907,584 \\
\hline
\ref{bnchmrk:mu_scalar}& 5 &\ding{55} & \ding{55} & \ding{55} & 4.46 $\pm$ 0.110 & 13,025,280   \\
\hline
\ref{bnchmrk:classic_pga}& 100 &\ding{55} & \ding{55} & \ding{55} & 4.40 $\pm$ 0.092 &260,505,600  \\
\hline
\end{tabular}
\vspace{0.2cm}
\caption{Comparison of achievable rates and computational complexity
for $0$ dB \ac{snr} in large-scale \ac{mimo}.}
\label{table:methods_summary_large}
\vspace{-0.25cm}
\end{table} 

 \vspace{-0.2cm}
\section{Case Study 2: Robust PCA}
\label{sec:PCA}

{{Through the hybrid beamforming study}}, we highlighted how unfolding combined with approximated computations can enable efficient inference with drastically reduced latency. 
In this section, we explore the application of \ac{rpca}, which is a widely-used task in various fields ranging from background subtraction and anomaly detection in image and video processing~\cite{bouwmans2018applications}. \ac{rpca} serves as a representative setup for optimization problems that are 
$(i)$ computationally demanding and have motivated several unfolded, yet non-approximated, optimization strategies in prior work~\cite{cai2021learned,tan2023deep};
$(ii)$ formulated using surrogate mathematical objectives rather than the true task loss. In particular, conventional \ac{rpca} formulations rely on convex relaxations of the underlying objective~\cite{candes2011robust},  which may be misaligned with the ultimate performance metric of interest. This makes \ac{rpca} an especially suitable domain for examining the added value of our approach in scenarios where the learning objective deviates from the optimization one.

To describe this case study, we first formulate \ac{rpca} and the considered first-order optimizer in Subsections~\ref{ssec:PCA_model} and \ref{ssec:PCA_opt}, respectively. Then we detail our unfolded methodology and evaluate it in Subsections~\ref{ssec:PCA_Lopt} and \ref{ssec:PCA_Exp}, respectively.

\vspace{-0.1cm}
\subsection{\ac{rpca} Formulation}\label{ssec:PCA_model}
\vspace{-0.1cm}
\subsubsection{Task}
\ac{rpca} addresses the decomposition of a data matrix $\RPCAMat\in\mathbb{R}^{n_1\times n_2}$ into a low-rank matrix $\RPCARank$ and a sparse matrix $\RPCASparse$, capturing structured and sparse components, respectively. Formally, \ac{rpca} is based on the assumption that
\begin{equation}
    \RPCAMat = \RPCARank^\star + \RPCASparse^\star,
    \label{eqn:RPCAModel}
\end{equation}
where $\RPCARank^\star$ captures the low-rank structure, and $\RPCASparse^\star$ represents the sparse component. 
The \ac{rpca} problem refers to the recovery of the low-rank and sparse components from the data matrix, i.e., the {\em decision} is $\Label = \{\RPCARank, \RPCASparse\}$, which should approach the true $\RPCARank^\star$ and $\RPCASparse^\star$, from the {\em context} $\Input = \RPCAMat$.

\subsubsection{Optimization Objective}
There are different convex approaches for solving the \ac{rpca} task (see, e.g. \cite{candes2011robust}).

To reduce complexity and computational cost, several non-convex formulations were proposed. Here, we adopt the one proposed in \cite{yi2016fast}, which factors the low-rank matrix as $\RPCARank = \myMat{L}\myMat{R}^{\top}$, where $\myMat{L}\in\mathbb{R}^{n_1\times r}$ and $\myMat{R}\in\mathbb{R}^{n_2\times r}$, with $r\ll\min(n_1,n_2)$.
Then, it solves the non-convex  optimization problem:
\begin{align}
\label{eqn:RPCA_nonconvex}
&\min_{\myMat{L},\myMat{R},\RPCASparse}  \mathcal{L}_{\rm o}(\myMat{L},\myMat{R},\RPCASparse; \RPCAMat) \triangleq \frac{1}{2}\|\myMat{L}\myMat{R}^{\top}\!+\!\RPCASparse\!-\!\RPCAMat\|_F^2, \\
\notag 
&
\quad\text{s.t. }\text{supp}(\RPCASparse)\subseteq\text{supp}(\RPCASparse^\star),
\end{align}
where $\text{supp}(\RPCASparse)$ denotes the support (indices of non-zero entries) of the sparse matrix $\RPCASparse$. While the formulation in \eqref{eqn:RPCA_nonconvex} requires prior knowledge of the unknown support of $\RPCASparse^\star$, it gives rise to an efficient iterative solver detailed in the sequel, which operates without such knowledge. 



\vspace{-0.1cm}
\subsection{Iterative Optimizer}\label{ssec:PCA_opt} 
\vspace{-0.1cm}
Problem~\eqref{eqn:RPCA_nonconvex} can be tackled by using an iterative algorithm employing gradient steps and soft-thresholding projection. We base our iterative optimizer on the method proposed in~\cite{cai2021learned}, which alternates between updating the sparse component $\RPCASparse$ and the low-rank components $\myMat{L},\myMat{R}$.

Specifically, the $k$th iteration updates the estimate of $\RPCASparse$ as
\begin{equation}\label{eq:outlier_update}
  \RPCASparse^{(k+1)} = \mySet{T}_{\zeta^{(k+1)}}\!\bigl(
      \RPCAMat-\myMat{L}^{(k)}(\myMat{R}^{(k)})^T
  \bigr),
\end{equation}
where $\zeta^{(k+1)}$ is a positive threshold hyperparameter, and  $\mySet{T}_{\zeta}$ is the element-wise soft-thresholding operator, given by
\begin{equation}
[\mySet{T}_{\zeta}(\myMat{M})]_{i,j} = \text{sign}([\myMat{M}]_{i,j})\max(|[\myMat{M}]_{i,j}| - \zeta,0).
\label{eq:soft_thresh}
\end{equation}

The updated matrices $\myMat{L},\myMat{R}$ using scaled gradient steps are
\begin{align}
\myMat{L}^{(k+1)}  
  &=\myMat{L}^{(k)} - \myVec{\eta}_L^{(k+1)} \odot \bigl(\nabla_{\myMat{L}}  \mathcal{L}_{\rm o}(\myMat{L}^{(k)},\myMat{R}^{(k)},\RPCASparse^{(k)}; \RPCAMat) \notag \\
            &\qquad\qquad \times ((\myMat{R}^{(k)})^T\myMat{R}^{(k)})^{-1}\bigr), 
\label{eq:update_L}
\\
\myMat{R}^{(k+1)} 
&= \myMat{R}^{(k)} 
   - \myVec{\eta}_R^{(k+1)} \odot \bigl(\nabla_{\myMat{R}} \mathcal{L}_{\rm o}(\myMat{L}^{(k)},\myMat{R}^{(k)},\RPCASparse^{(k)}; \RPCAMat) \notag \\
            &\qquad\qquad \times ((\myMat{L}^{(k+1)})^T\myMat{L}^{(k+1)})^{-1}\bigr), 
\label{eq:update_R}
\end{align}
where $\myVec{\eta}_L^{(k+1)}={\eta}_L^{(k+1)}\myVec{1},  \myVec{\eta}_R^{(k+1)}={\eta}_R^{(k+1)}\myVec{1} $ are the {\em{scalar}} step-sizes hyperparameters. 
The  objective gradients are obtained as
\begin{equation}
\label{eq:grad_L}
\nabla_{\myMat{L}}  \mathcal{L}_{\rm o}(\myMat{L},\myMat{R},\RPCASparse; \RPCAMat)
=  
  \bigl(\myMat{L} \myMat{R}^T  + \RPCASparse - \RPCAMat\bigr)\myMat{R}, 
\end{equation}
\begin{equation}
\label{eq:grad_R}
\nabla_{\myMat{R}}  \mathcal{L}_{\rm o}(\myMat{L},\myMat{R},\RPCASparse; \RPCAMat)
= 
  \bigl(\myMat{L}\myMat{R}^T  + \RPCASparse - \RPCAMat\bigr)^T \myMat{L}. 
\end{equation}
The iterations commence by setting $\RPCASparse^{(0)} = \mySet{T}_{\zeta^{(0)}}(\RPCAMat)$ and initializing ${\eta}_L^{(0)}$ and ${\eta}_R^{(0)} $ from the truncated \ac{svd} of $\RPCAMat-\RPCASparse^{(0)}$.

\vspace{-0.1cm}
\subsection{Learned Approximated Optimizer}\label{ssec:PCA_Lopt}
\vspace{-0.1cm}
\subsubsection{Unfolding RPCA}
While the \ac{rpca} based on \eqref{eq:outlier_update}-\eqref{eq:update_R} can yield suitable offline processing, in many scenarios it is impractical for real-time applications. This stems from the fact that 
$(i)$ it requires multiple iterations to converge; and $(ii)$ each iteration involves heavy computations, particularly when the rank of $\RPCARank$ is not too small. Specifically, the complexity per iteration can be broken down into the following components\footnote{For brevity, we take $n \triangleq n_1 = n_2$ when discussing  complexities}: 
\begin{enumerate}
    \item \textbf{Sparse component update \eqref{eq:outlier_update}}:  This step involves computing the residual and applying element-wise soft-thresholding, which requires $n^2 r + n^2$ flops.
    \item \textbf{Low-rank component updates \eqref{eq:update_L}--\eqref{eq:update_R}}: Let $\mathcal{P}_{\mathrm{low}}$ denotes the computational cost of updating a single factor ($\myMat{L}$ or $\myMat{R}$). Each low-rank update involves computing the gradient and a normalization factor, which requires an inversion of an $r\times r$ matrix. The total cost per update therefore satisfies $\mathcal{P}_{\mathrm{low}} \propto n^2 r + 2nr^2 + r^3$ flops. 
\end{enumerate}
    Since we need to update both $\myMat{L}$ and $\myMat{R}$ in each iteration, the total cost of computing the low‑rank is  $2\mathcal{P}_{\mathrm{low}}$ flops.
In total, the complexity per iteration  is:
\begin{equation}
    \mathcal{C}_{\rm RPCA}^{\rm iter}  = n^{2}r + n^{2} +2\mathcal{P}_{\mathrm{low}}.
    \label{eqn:rpca_Complexity}
\end{equation}

The work \cite{cai2021learned} proposed to facilitate operation with a limited number of iterations by treating the thresholds $\{\zeta^{(k)}\}$ and the step-sizes $\{\eta^{(k)}\}$ (which were fixed to be identical for both $\myMat{L}$ and $\myMat{R}$) as learnable parameters of an unfolded network.  This approach enables efficient operation with a fixed number of iterations. However, the per-iteration complexity 
remains unchanged and is still given by
\eqref{eqn:rpca_Complexity}. 
Notably, the successful application of deep unfolding to these iterative updates motivates further reducing complexity by integrating deliberate approximation based on our methodology.

\subsubsection{Learned Approximated \ac{rpca}}
Our methodology leads to {\em \ac{larpca}}, which replaces some of the gradient computations with low-complexity approximations, combined with increased parameterization and learned hyperparameters to preserve the operation of \ac{rpca}. Accordingly, we first fix the number of iterations to a small integer $K$. To further reduce the per-iteration complexity,  we   introduce the following approximations:
\begin{enumerate}[label={\em J\arabic*},series=innovations2] \label{approxi}
  \item \label{L_approx} \textbf{$\nabla_{\myMat{L}}  \mathcal{L}_{\rm o}$ Approximation} - 
  \
  For a subset of iterations $\mySet{K}_{L}^{\rm approx}$, instead of computing the full gradient in  \eqref{eq:grad_L}, \ac{larpca} reuses the gradient of the previous iteration as a form of momentum, i.e., 
\begin{equation*}
      \myMat{L}^{(k+1)} =
    \begin{cases}
	\myMat{L}^{(k)} & k \in \mySet{K}_{L}^{\rm approx} ,\\
        \eqref{eq:update_L} & k \notin \mySet{K}_{L}^{\rm approx} .\\
    \end{cases} 
  \end{equation*}
  \item \label{R_approx} \textbf{$\nabla_{\myMat{R}}  \mathcal{L}_{\rm o}$ Approximation} - Similarly, for a subset of iterations $\mySet{K}_{R}^{\rm approx} $,  the update of $\myMat{R}$ is skipped and its value is retained from the previous iteration:    
\begin{equation*}
      \myMat{R}^{(k+1)} =
    \begin{cases}
	\myMat{R}^{(k)} & k \in \mySet{K}_{R}^{\rm approx} ,\\
        \eqref{eq:update_R} & k \notin \mySet{K}_{R}^{\rm approx} .\\
    \end{cases} 
  \end{equation*}
\end{enumerate}
The approximated computations in \ref{L_approx}-\ref{R_approx} are instances of \eqref{eqn:approxOpt}, with $\mySet{K}^{\rm approx} = \mySet{K}_{L}^{\rm approx}  \cup \mySet{K}_R^{\rm approx}$. 

As discussed in Subsection~\ref{subsec:methodology}, we also increase the parameterization to enhance learning capacity. In the case of \ac{larpca}, this is achieved through the following design choices:
\begin{enumerate}[label={\em J\arabic*},resume=innovations2] 
    \item \label{itm:sepStep} \textbf{Separate step–sizes for \(\myMat{L}\) and \(\myMat{R}\)} 
          allowing different learning dynamics for the two factors.  
   \item \label{size_matrix}\textbf{Element-wise step-sizes} - To provide additional degrees of freedom for coping with errors induced by approximations, we use matrix-valued step-sizes, such that $\Theta_k = [\bm{\eta}_L^{(k)}, \bm{\eta}_R^{(k)}]$, with $\bm{\eta}_L^{(k)} \in \mathbb{R}^{n_1 \times r}$ and $\bm{\eta}_R^{(k)} \in \mathbb{R}^{n_2 \times r}$. 
\end{enumerate}
Combining \ref{L_approx}-\ref{size_matrix}, the resulting \ac{larpca} is summarized as Algorithm~\ref{alg:LRPCA_iterative}. The formulation returns the low rank component, from which the sparse component is recovered by subtraction.  


\begin{algorithm}
\caption{\ac{larpca}}
\label{alg:LRPCA_iterative}
\SetKwInOut{Initialization}{Init}
    \Initialization{Step-sizes $\{\bm{\eta}_L^{(k)}, \bm{\eta}_R^{(k)}\}$;  
     thresholds $\{\zeta^{(k)}\}$; \newline
    Approximation indices $\mySet{K}_R^{\rm approx}, \mySet{K}_L^{\rm approx}$;   rank $r$ }
    \SetKwInOut{Input}{Input} 
    \Input{Data matrix $\RPCAMat$}  
Set  $\RPCASparse^{(0)} \leftarrow \mySet{T}_{\zeta^{(0)}}(\RPCAMat)$\;
   Compute $[\myMat{U},\myMat{\Sigma},\myMat{V}] \leftarrow \mathrm{SVD}_{r}(\RPCAMat-\RPCASparse^{(0)})$\; 
   Set  $\myMat{L}^{(0)} \leftarrow 
   {{\myMat{U}}}\myMat{\Sigma}^{1/2},\quad
    \myMat{R}^{(0)}\leftarrow \myMat{V}\myMat{\Sigma}^{1/2}$\; 
\For{$k = 0, 1,2 \ldots K-1$}{
    Set    $\RPCASparse^{(k+1)}\leftarrow
        \mySet{T}_{\zeta^{(k+1)}}\!\bigl(\RPCAMat-\myMat{L}^{(k)}(\myMat{R}^{(k)})^T\bigr)$\;
    \If{$k \notin \mySet{K}_L^{\rm approx}$}
    {
         Set $\myMat{G} \!\leftarrow\! \nabla_{\myMat{L}}  \mathcal{L}_{\rm o}(\myMat{L}^{(k)},\myMat{R}^{(k)},\RPCASparse^{(k\!+\!1)}; \RPCAMat)$  \eqref{eq:grad_L}\;
         $\myMat{L}^{(k+1)} \leftarrow \myMat{L}^{(k)}\! - \!{\bm{\eta}}_L^{(k+1)} \odot \myMat{G} ((\myMat{R}^{(k)})^T\myMat{R}^{(k)})^{-1}$\;
    }
    \Else
    {
        Set $\myMat{L}^{(k+1)} \leftarrow \myMat{L}^{(k)}$\;
    }
    \If{$k \notin \mySet{K}_R^{\rm approx}$}
    {
       \hspace{-0.28cm} Set  $\myMat{G}\! \leftarrow\! \nabla_{\myMat{R}}  \mathcal{L}_{\rm o}(\myMat{L}^{(k\!+\!1)},\myMat{R}^{(k)},\RPCASparse^{(k\!+\!1)}; \RPCAMat)$  \hspace{-0.1cm}  via \hspace{-0.05cm}\eqref{eq:grad_R}\;
         $\myMat{R}^{(k+1)}\! \leftarrow \!\myMat{R}^{(k)} \!-\! {\bm{\eta}}_R^{(k\!+\!1)} \odot \myMat{G} ((\myMat{L}^{(k\!+\!1)})^T\myMat{L}^{(k\!+\!1)})^{-1}$\;
    }
    \Else
    {
        Set $\myMat{R}^{(k+1)} \leftarrow \myMat{R}^{(k)}$\;
    }
} 
\KwRet{$\RPCARank^{(K)}=\myMat{L}^{(K)}(\myMat{R}^{(K)})^{T};$}
\end{algorithm} 

                    
                     
                
\ac{larpca} realizes an iterative optimizer with few iterations and approximated low-complexity operations. In particular, the numbers of computed gradients  \eqref{eq:grad_R}, \eqref{eq:grad_L}, and the normalized factors ($(\myMat{R}_k^\top \myMat{R}_k)^{-1}$ and $(\myMat{L}_k^\top \myMat{L}_k)^{-1}$),   decreased by the following formula:
\begin{equation}
    \mathcal{C}_{\rm LARPCA}^{\rm iter}  \!=\! n^{2}r \!+ \!n^{2}\! + \!\frac{2K-|\mySet{K}_R^{\rm approx}|\!-\!|\mySet{K}_L^{\rm approx}|}{K}\mathcal{P}_{\mathrm{low}}.
    \label{eqn:CompRed_rpca}
\end{equation}
When no approximation is applied, i.e., $|\mySet{K}_R^{\rm approx}|=|\mySet{K}_L^{\rm approx}|=0$, then the per-iteration complexity in \eqref{eqn:CompRed_rpca} coincides with \eqref{eqn:rpca_Complexity}.   
The reduction in \eqref{eqn:CompRed_rpca} is shown in Subsection~\ref{ssec:PCA_Exp}  significantly enhance latency, and the increased abstractness  facilitates coping with the mismatches induced by such deliberate approximations.

\subsubsection{Training}
We consider both a supervised and an unsupervised learning setting. In the former, the  data set $\mySet{D}$ used for training consists of realizations of the observation matrix $\RPCAMat$ and the desired low-rank component $\RPCARank^\star$, i.e., 
$\mathcal{D}_{\mathrm{train}}=\{(\RPCAMat^{(i)},\RPCARank^{(i)})\}_{i=1}^{|\mathcal{D}|}$.
Accordingly, training is carried out based on the supervised empirical risk~\eqref{eqn:losssup}. 

For unsupervised data, i.e., when there is no ground-truth low-rank component and one only has access to  $\mathcal{D}_{\mathrm{train}}=\{\RPCAMat^{(i)}\}_{i=1}^{|\mathcal{D}|}$, we formulate a training loss that balances relative reconstruction error and sparsity on $\hat{\RPCASparse}$. Specifically, since the objective in \eqref{eqn:RPCA_nonconvex} does not include an implicit sparse term, we employ the unsupervised loss of \eqref{eqn:lossUnsup} with the following surrogate objective
\begin{equation}
\hat{\mathcal{L}}_{\rm o}(\hat{\RPCASparse}, \hat{\RPCARank}; \RPCAMat) = \frac{\|\RPCAMat -\hat{\RPCARank}\|_F}{\|\RPCAMat\|_F} + \lambda_s \frac{\|\hat{\RPCASparse}\|_1}{n_1\cdot n_2},
\label{eqn:UsupRPCA}
\end{equation}
where $\lambda_s$ is a hyperparameter. 
\vspace{-0.1cm}
\subsection{Experimental Study}\label{ssec:PCA_Exp}
\vspace{-0.1cm}

In this section, we evaluate \ac{larpca} in an experimental study, aiming to demonstrate our ability to achieve comparable results to non-approximated solvers, while reducing computational cost and runtime\footnote{The source code and trained models for both case studies are available at  \url{https://github.com/dviravra/Unfolded_Approximated_Optimization}}. We thus use the following benchmarks
\begin{itemize}  
    \item {\em LRPCA}:  Unfolded \ac{rpca} with learned scalar step-sizes~\cite{cai2021learned}.
    \item {\em Fixed}: Iterative optimization via \eqref{eq:outlier_update}-\eqref{eq:update_R}  with fixed hyperparameters optimized manually to achieve convergence. 
\end{itemize}

To assess accuracy and efficiency, we consider:  
\emph{(i)} Synthetic data simulated from the model in \eqref{eqn:RPCAModel}, where the matrices $\RPCARank^\star$ and $\RPCASparse^\star$ are generated from a Gaussian distribution complying with the low-rank and sparse priors. This synthetic setting lets us isolate the reduction in flops in a controlled environment that fully complies with the modeling assumptions of \ac{rpca}; 
and  
\emph{(ii)} Video decomposition, using the VIRAT video dataset~\cite{oh2011large}, where the low-rank component represents the static aspects of the video, while the sparse component models moving objects. 

\subsubsection{Synthetic Data} 
We first generate an \ac{rpca} setting in which the data matrix sizes are $n_1=n_2=1000$, the low-rank component has $r=5$, and the sparse component has $\alpha=0.1$ non-zero entries. An overall of $|\mySet{D}|=17500$ realizations were used for training the learned optimizers, and for manually tuning the hyperparameters of the {\em fixed} solver to achieve an error level of $10^{-7}$ in reconstructing the low-rank component $\RPCARank^\star$. 
The resulting convergence profile, averaged over $25$ test realizations, is illustrated in Fig.~\ref{fig:PCA_conv}. There, we observe that both learned variants achieve the target error in far fewer iterations compared to the fixed-parameter solver, whose error hardly decreases within the first 25 iterations. It is also observed that the increased abstractness induced by \ref{itm:sepStep}-\ref{size_matrix} allows \ac{larpca} to outperform LRPCA, despite its deliberate occasional approximations. 

\begin{figure}
  \centering
  \includegraphics[width=0.45\textwidth]{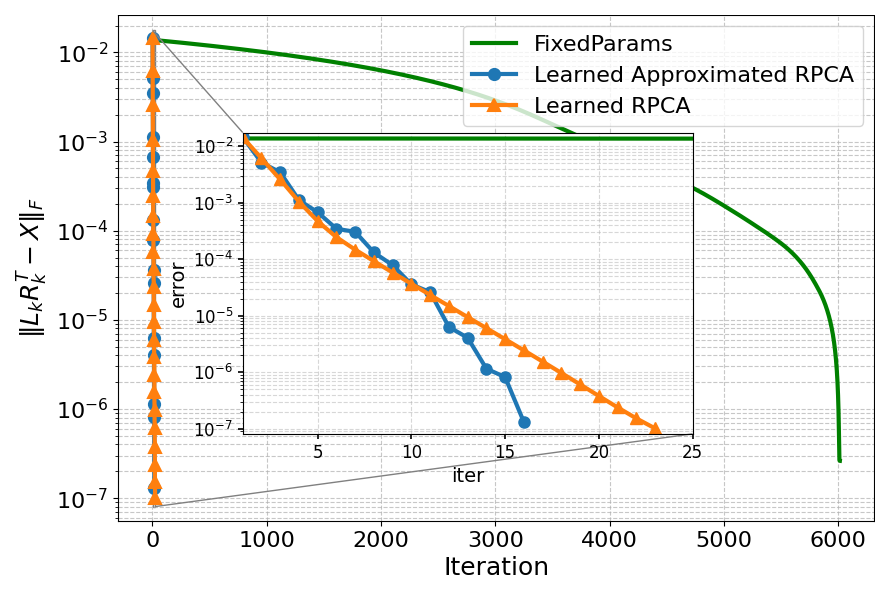}
  \caption{Low-rank recovery error vs. iterations, synthetic data}
  \label{fig:PCA_conv}
\end{figure}


\begin{table}
\centering
 \fontsize{7.5pt}{10pt}\selectfont
\begin{tabular}{|c|c|c|c|c|c|c|}
\hline
 \cellcolor{lightgray}
{\textbf{Method}} &  \cellcolor{lightgray} \bm{$K$} & \cellcolor{lightgray} \textbf{gradient} &
\cellcolor{lightgray} \textbf{flops} \\
\hline
\textbf{\ac{larpca}}& 16 & 16 & $1.768\times10^8$  \\
\hline
\textbf{LRPCA}& 24 & 48 & $3.864\times10^8$  \\
\hline
\textbf{Fixed hyperparameters}& 6000 & 12000 & $9.661\times10^{10}$  \\
\hline
\end{tabular}
\vspace{0.1cm}
\caption{Comparison of optimization methods in terms of computational complexity and performance {with error target $10^{-7}$}.}
\label{tab:PCA_complexity}
\vspace{-0.6cm}
\end{table}

To evaluate the reduction in computations, we run each of the optimizers to  reach a target error of $10^{-7}$. As summarized in Table~\ref{tab:PCA_complexity}, \ac{larpca}  reached this error level in only $16$ iterations, while performing merely $16$ gradient updates in total. The non-approximated unfolded  LRPCA required $24$ iterations (with $48$ gradient updates). This corresponds to skipping 32 out of 48 gradient steps (around $67\%$ reduction). Consequently,  \ac{larpca}  only needed $1.76\times10^{8}$ flops to reach the solution, which is less than half of the $3.86\times10^{8}$ flops required by LRPCA.  To reach the same error of $\approx10^{-7}$, using standard (non-learned) optimization with fixed hyperparameters requires $6000$ iterations and $12000$ gradient calculations, resulting in around $1\times10^{11}$ flops,  which is over three orders of magnitude compared to \ac{larpca}.


\begin{figure}
  \centering
\includegraphics[width=0.9\columnwidth]{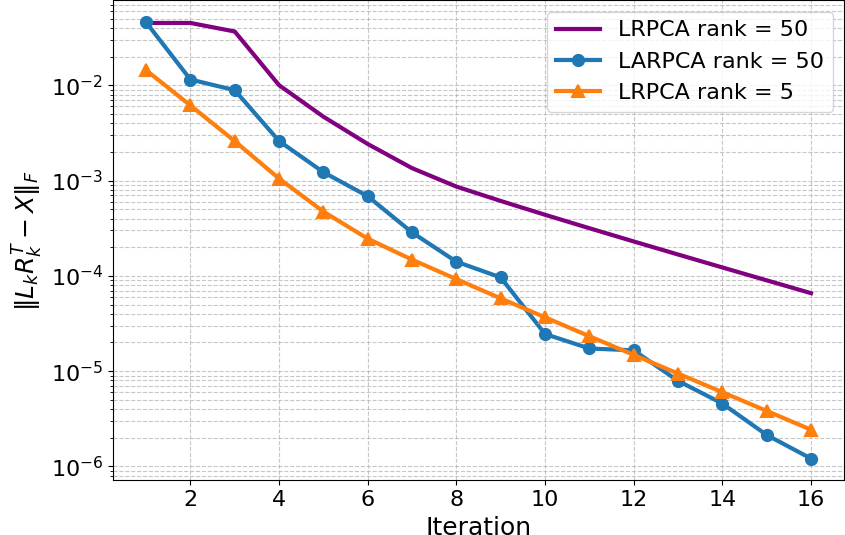}
  \caption{Convergence for higher-rank ($r=50$) vs lower-rank ($r=5$).}
  \label{fig:PCA_high_rank}
  \vspace{-0.2cm}
\end{figure}

As a further evaluation of our proposed method, we examined its robustness to higher-rank scenarios. Fig.~\ref{fig:PCA_high_rank} illustrates a comparison between the convergence profiles of LRPCA and LARPCA for data matrices of increased rank ($r=50$), 
showing that  LRPCA  struggles when the rank of the underlying low-rank component increases. Specifically, when increasing the rank from $r=5$ to $r=50$,  LRPCA  plateaus at a significantly higher error level.  LARPCA, despite  incorporating  approximations (at iterations 6-9), achieves a much lower error, rapidly converging towards the desired solution. This demonstrates that the enhanced flexibility provided by learnable approximations and individual step-size matrices not only reduces computational complexity but also enables RPCA  with challenging, higher-rank data.

\begin{figure*}
  \centering
  \includegraphics[width=0.83\textwidth]{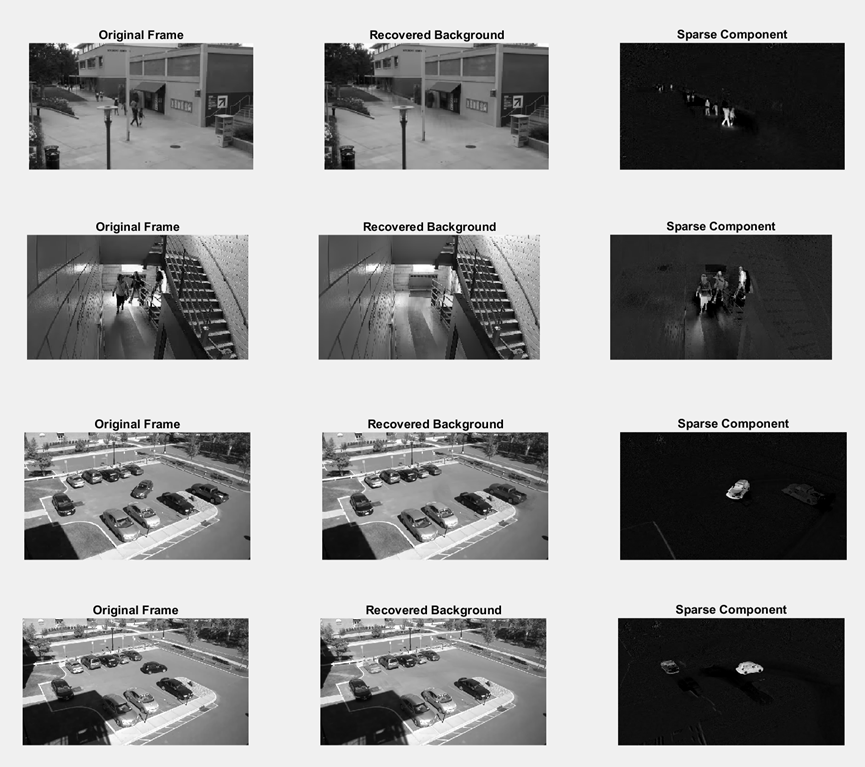}
  \caption{Representative outputs produced by \ac{larpca} for video decomposition.}
  \label{fig:real_data_comparison}
\end{figure*}

We further evaluate the effect of varying the approximation percentage on the reconstruction error. As shown in Table~\ref{table:error_vs_complexity}, multiple simulations were conducted at different approximation levels. The results indicate that even substantial reductions in computational complexity through approximation have only a minor impact on the relative reconstruction error. 

\begin{table}
\centering
\fontsize{7pt}{10pt}\selectfont
\begin{tabular}{|c|c||c|c|}
\hline
\cellcolor{lightgray}\textbf{\% Approximated} & \cellcolor{lightgray}\textbf{Error} & \cellcolor{lightgray}\textbf{\% Approximated} & \cellcolor{lightgray}\textbf{Error} \\
\hline
6.25\%  & $9.07\times10^{-8}$ & 31.25\% & $8.97\times10^{-8}$ \\
\hline
12.50\% & $7.75\times10^{-8}$ & 40\% & $1.31\times10^{-7}$ \\
\hline
25.00\% & $7.75\times10^{-8}$ & 50.00\% & $1.55\times10^{-7}$ \\
\hline
\end{tabular}
\vspace{0.2cm}
\caption{Reconstruction error vs.\ percentage of approximated complexity.}
\label{table:error_vs_complexity}
\vspace{-0.4cm}
\end{table}

\subsubsection{Video Decomposition}
To further validate our learned approximated optimization approach, we evaluated its effectiveness using real-world video data from the VIRAT dataset. This dataset is widely adopted for video surveillance analysis, featuring complex scenes with sparse moving objects over largely static backgrounds, which can be viewed as the low-rank and sparse structure of the RPCA model. For each experiment, we used $16$ videos for training, and $2$ for test. 
 %
%
%
Following the pre-processing approach of \cite{cai2021learned}, we initially convert color frames to grayscale. Subsequently, we uniformly subsampled the frame rate by a factor of two to further reduce data dimensionality and computational complexity, while preserving temporal coherence.
The experimental study comprised sequences from the VIRAT dataset resized to three distinct resolutions, i.e., $640\times360$, $480\times270$, and $320\times180$ pixels.  Each video was vectorized column-wise to form a data matrix suitable for \ac{rpca} decomposition.

\begin{table}
\centering
\fontsize{8pt}{10pt}\selectfont
\renewcommand{\arraystretch}{1.2}
\begin{tabular}{|c|c|c|c|c|}
\hline
\rowcolor{lightgray}
\textbf{Method} & \textbf{Resolution} & \textbf{Frames} & \textbf{Error} & \textbf{Runtime [sec]} \\
\hline
\multirow{4}{*}{\textbf{\ac{larpca}}}
& \multirow{2}{*}{$320\times180$} & 584 & $9.21\times10^{-4}$ & 2.54 \\
\cline{3-5}
& & 750 & $4.16\times10^{-3}$ & 2.67 \\
\cline{2-5}
& $480\times270$ & 2332 & $9.66\times10^{-4}$ & 132.65 \\
\cline{2-5}
& $640\times360$ & 3109 & $1.13\times10^{-3}$ & 1378.21 \\
\hline
\multirow{4}{*}{\textbf{LRPCA}}
& \multirow{2}{*}{$320\times180$} & 584 & $8.72\times10^{-4}$ & 3.91 \\
\cline{3-5}
& & 750 & $4.15\times10^{-3}$ & 5.85 \\
\cline{2-5}
& $480\times270$ & 2332 & $9.66\times10^{-4}$ & 316.11 \\
\cline{2-5}
& $640\times360$ & 3109 & $1.10\times10^{-3}$ & 2193.46 \\
\hline
\end{tabular}
\vspace{0.2cm}
\caption{Comparison of Learned \ac{rpca}  for video decomposition.}
\label{tab:real_data_results}
\vspace{-0.6cm}
\end{table}

The VIRAT dataset is unlabeled for RPCA decomposition, therefore, we employed \eqref{eqn:UsupRPCA} for training the unfolded optimizers.
Due to the computationally intensive nature of this data, particularly at high resolutions, the learned optimizers  operate with $K=5$ iterations, while assuming $r=2$. The reported error is evaluated as the objective  
    $\text{Error}(\hat{\mathbf{V}}, \mathbf{X}) 
    = \frac{\|\mathbf{X} - \hat{\mathbf{V}}\|_F^2}{n_1 \cdot n_2 \cdot \|\mathbf{X}\|_F}$.
All runtime values are reported as wall-clock times on the same platform, a Lenovo IdeaPad Slim~5~14IRL8 with an Intel 13th-Gen Core i5-13420H (8C/12T, 2.10\,GHz base), 16\,GB RAM.
Here, we only evaluate the learned optimizers, as optimization with fixed hyperparameters was unstable and lengthy. 
%
 LARPCA achieves approximately $40\%-58\%$ runtime reduction while preserving the same final accuracy.
Table~\ref{tab:real_data_results} reports quantitative results for representative sequences, clearly indicating  similar reconstruction accuracy (measured via relative squared-error per pixel) compared to  LRPCA, while significantly reducing computational time. We also include four representative results in Fig.~\ref{fig:real_data_comparison}, which demonstrates the effectiveness of \ac{larpca} in foreground-background separation, despite employing approximated computational strategies. 

 \vspace{-0.2cm}
\section{Conclusions}
\label{sec:conclusion} 
We introduced a novel framework for learned approximated optimization, which extends deep unfolding by integrating approximated computations into iterative solvers. 
By deliberately replacing computationally intensive operations with lightweight surrogates and learning to compensate for their induced mismatches, we enable rapid and efficient optimization. We instantiated the proposed methodology in two representative case studies of hybrid beamforming and \ac{rpca}, demonstrating the effectiveness of jointly optimizing iteration depth and per-iteration operations. 

\appendix
\vspace{-0.2cm}
\section{Appendix}
\numberwithin{lemma}{subsection}  
\numberwithin{remark}{subsection} 
\numberwithin{equation}{subsection}	
\subsection{Proof of Proposition~\ref{prop:approx_gd_partial}}
\label{app:proof1}
\vspace{-0.1cm}

 For brevity, let $f(\Label)\triangleq \mathcal{L}_{\rm o}(\Label;\Input)$, $\myVec{g}^{(k)}\triangleq \nabla f(\Label^{(k)})$, and $\myMat{D}_k\triangleq \mathrm{diag}(\bm{\eta}_k)$.
Also define
\[
\myVec{e}^{(k)} \triangleq
\begin{cases}
\tilde{\myVec{g}}^{(k)}(\Label^{(k)};\Input)-\nabla f(\Label^{(k)}), & k\in\mathcal{K}^{\rm approx},\\
\myVec{0}, & k\notin\mathcal{K}^{\rm approx}.
\end{cases}
\]
Then
$\Label^{(k+1)} = \Label^{(k)} - \myMat{D}_k\big(\myVec{g}^{(k)}+\myVec{e}^{(k)}\big)$ and 
$\|\myMat{D}_k \myVec{e}^{(k)}\|_2 \leq \delta_k$.

By the $L$-smoothness of the objective, we have that
\begin{align}
&f(\Label^{(k+1)})
\leq
f(\Label^{(k)})
-
\left\langle \myVec{g}^{(k)}, \myMat{D}_k \myVec{g}^{(k)}\right\rangle
-
\left\langle \myVec{g}^{(k)}, \myMat{D}_k \myVec{e}^{(k)}\right\rangle \notag \\
&\qquad+\frac{L}{2}\|\myMat{D}_k\myVec{g}^{(k)}+\myMat{D}_k\myVec{e}^{(k)}\|_2^2  \notag \\
&= f(\Label^{(k)})
-
\left\langle \myVec{g}^{(k)}, \myMat{D}_k \myVec{g}^{(k)}\right\rangle
-
\left\langle \myVec{g}^{(k)}, \myMat{D}_k \myVec{e}^{(k)}\right\rangle+\frac{L}{2}\|\myMat{D}_k\myVec{g}^{(k)}\|_2^2 \notag \\
&\qquad+
L\left\langle \myMat{D}_k\myVec{g}^{(k)}, \myMat{D}_k\myVec{e}^{(k)}\right\rangle
+\frac{L}{2}\|\myMat{D}_k\myVec{e}^{(k)}\|_2^2 .
\label{eq:proof_start}
\end{align}
 
Next, since all entries of $\bm{\eta}_k$ lie in $[\underline{\eta}_k,\bar{\eta}_k]$, then
\begin{align}
\|\myMat{D}_k\myVec{g}^{(k)}\|_2^2
&=
(\myVec{g}^{(k)})^\top \myMat{D}_k^2 \myVec{g}^{(k)} \leq
\bar{\eta}_k
(\myVec{g}^{(k)})^\top \myMat{D}_k \myVec{g}^{(k)}\notag \\
&
=
\bar{\eta}_k
\left\langle \myVec{g}^{(k)}, \myMat{D}_k \myVec{g}^{(k)}\right\rangle .
\label{eq:D2_bound}
\end{align}
Substituting \eqref{eq:D2_bound} into \eqref{eq:proof_start} gives
\begin{align}
f(\Label^{(k+1)})
&\leq
f(\Label^{(k)})
-
\left(1-\frac{L}{2}\bar{\eta}_k\right)
\left\langle \myVec{g}^{(k)}, \myMat{D}_k \myVec{g}^{(k)}\right\rangle \notag \\
&-
\left\langle \myVec{g}^{(k)}, \myMat{D}_k \myVec{e}^{(k)}\right\rangle
\nonumber\\
&
+
L\left\langle \myMat{D}_k\myVec{g}^{(k)}, \myMat{D}_k\myVec{e}^{(k)}\right\rangle
+\frac{L}{2}\|\myMat{D}_k\myVec{e}^{(k)}\|_2^2 .
\label{eq:after_descent_term}
\end{align}
Applying Cauchy--Schwarz and using $\|\myMat{D}_k \myVec{e}^{(k)}\|_2 \leq \delta_k$ together with
$\|\myMat{D}_k\myVec{g}^{(k)}\|_2 \leq \bar{\eta}_k \|\myVec{g}^{(k)}\|_2$, \eqref{eq:after_descent_term} yields
\begin{align}
f(\Label^{(k+1)})
&\leq
f(\Label^{(k)})
-
\left(1-\frac{L}{2}\bar{\eta}_k\right)
\left\langle \myVec{g}^{(k)}, \myMat{D}_k \myVec{g}^{(k)}\right\rangle \notag \\
&+
(1+L\bar{\eta}_k)\|\myVec{g}^{(k)}\|_2 \delta_k
+\frac{L}{2}\delta_k^2 .
\label{eq:before_young}
\end{align}
Since $\myMat{D}_k \succeq \underline{\eta}_k \myMat{I}$, then $\left\langle \myVec{g}^{(k)},\myMat{D}_k\myVec{g}^{(k)}\right\rangle
\geq
\underline{\eta}_k \|\myVec{g}^{(k)}\|_2^2$ 
and thus
\begin{equation*}
f(\Label^{(k+1)})
\leq
f(\Label^{(k)})
-c_k \|\myVec{g}^{(k)}\|_2^2
+(1+L\bar{\eta}_k)\|\myVec{g}^{(k)}\|_2\,\delta_k
+\frac{L}{2}\delta_k^2 , 
\end{equation*}
where $c_k \triangleq \underline{\eta}_k(1-\frac{L}{2}\bar{\eta}_k)$.
By Young's inequality, we have 
$(1+L\bar{\eta}_k)\|\myVec{g}^{(k)}\|_2\,\delta_k
\leq
\frac{c_k}{2}\|\myVec{g}^{(k)}\|_2^2
+\frac{(1+L\bar{\eta}_k)^2}{2c_k}\delta_k^2 $,
so that
\begin{equation*}
f(\Label^{(k+1)})
\leq
f(\Label^{(k)})
-\frac{c_k}{2}\|\myVec{g}^{(k)}\|_2^2
+
\left(\frac{L}{2}+\frac{(1+L\bar{\eta}_k)^2}{2c_k}\right)\delta_k^2 .
\end{equation*}
The strong convexity ($\|\nabla f(\Label)\|_2^2 \geq 2\mu \big(f(\Label)-f(\Label^\star)\big)$) implies
\begin{equation*}
f(\Label^{(k+1)})-f(\Label^\star)
\leq
(1-\mu c_k)\big(f(\Label^{(k)})-f(\Label^\star)\big)
+
C_k \delta_k^2 ,
, 
\end{equation*}
where $C_k \triangleq \frac{L}{2}+\frac{(1+L\bar{\eta}_k)^2}{2c_k}$.
Since $c_k \geq c$ and $C_k \leq C$, 
\begin{equation}
f(\Label^{(k+1)})-f(\Label^\star)
\leq
(1-\mu c)\big(f(\Label^{(k)})-f(\Label^\star)\big)
+
C\delta_k^2 .
\label{eqn:Recur2}
\end{equation}
Unrolling recursion \eqref{eqn:Recur2} over $k=0,\ldots,K-1$ yields \eqref{eqn:bound_partial}.

\bibliographystyle{IEEEtran} 
\bibliography{IEEEabrv,ref}

\end{document}